\titlespacing{\section}{0pt}{\parskip}{-\parskip}
\titlespacing{\subsection}{0pt}{\parskip}{-\parskip}
\titlespacing{\subsubsection}{0pt}{\parskip}{-\parskip}
\def\spacingset#1{\renewcommand{\baselinestretch}%
{#1}\small\normalsize} \spacingset{1}
\newtheorem{theorem}{Theorem}
\newtheorem{lemma}{Lemma}
\newcommand{\be}{\boldsymbol{\beta}}
\begin{document}

\title{\bf Distributed variable screening for generalized linear models}

\author[1,2]{Tianbo Diao}
\author[1]{Lianqiang Qu}
\author[1]{Bo Li}
\author[3,4]{Liuquan Sun}

\affil[1]{School of Mathematics and Statistics, Central China Normal University, Wuhan, Hubei, 430079, China}
\affil[2]{School of Mathematics and Science,
Nanyang Institute of Technology, Nanyang, Henan, 473004, China }
\affil[3]{Academy of Mathematics and Systems Science, Chinese Academy of Sciences, Beijing 100190, China}
\affil[4]{School of Mathematical Sciences, University of Chinese Academy of Sciences,  Beijing 100190, China}


\date{}
\maketitle

\begin{abstract}

	In this article,  we develop a distributed variable screening method for generalized linear models.
	This method is designed to handle situations where both the sample size and the number of covariates are large.
	Specifically, the proposed method selects relevant covariates by using a sparsity-restricted surrogate likelihood estimator.
    It takes into account the joint effects of the covariates rather than just the marginal effect, and this characteristic enhances the reliability of the screening results.
	We establish the sure screening property of the proposed method,
	which ensures that with a high probability, the true model is included in the selected model.
	Simulation studies are conducted to evaluate the finite sample performance of the proposed method,
	and an application to a real dataset showcases its practical utility.

\noindent {\bf KEY WORDS: } Distributed learning; Generalized linear models; Massive data; Variable screening.
\end{abstract}

\spacingset{1.5}
\section{Introduction}

With the rapid advancement of technology, we are frequently confronted with massive data in various scientific fields as well as in our daily lives, 
and these massive datasets often involve ultrahigh dimensional data.
The high dimensionality of such data presents simultaneous challenges in terms of computational cost, statistical accuracy, and algorithmic stability for classical statistical methods~\cite{fan2010selective}.
Therefore, feature screening is critically important.

Based on the analysis of marginal correlations between the response and covariates, ~\cite{fan2008sure} proposed the sure independence screening (SIS) method to select active covariates for ultrahigh dimensional linear models. 
The SIS method has been widely studied in various contexts 
due to its ability to rapidly reduce the dimension of covariates from a high dimensional space to a moderate scale 
that is smaller than the sample size~\citep{fan2010sure,zhu2011model,li2012feature}.

However, simple SIS methods face several challenges. For instance, they may overlook some crucial covariates that are individually unrelated 
but jointly associated with the response variable. Additionally, these methods can yield misleading outcomes when there are significant correlations among the covariates.
To overcome these issues, ~\cite{wang2009forward} introduced a forward regression screening method for linear regression models. 
Recently, ~\cite{zhou2020model} extended the idea of~\cite{wang2009forward} to model-free settings by incorporating the cumulative divergence, 
which effectively considered the joint impact of covariates. 
~\cite{xu2014sparse} proposed a jointly variable screening method by using the sparsity-restricted
maximum likelihood estimator, which naturally takes the joint effects of covariates in the screening process.
~\cite{yang2016feature} proposed a sure screening procedure for the Cox regression model 
using a joint partial likelihood function of potential active covariates.
~\cite{hao2021optimal} developed an optimal minimax variable screening method for large-scale matrix linear regression models.

The aforementioned methods are primarily designed under situations,
in which the number of covariates $p$ is large but the sample size $N$ is moderate.
When the number of samples and covariates are both huge, 
the classic screening methods may be computationally intractable 
due to storage constraints.  
To address these challenges, ~\cite{JMLR:v21:19-537} developed a distributed variable screening framework based on aggregated correlation measures.
More recently, ~\cite{li2023feature} proposed a distributed variable screening with conditional rank utility for big-data classification.
However, these methods considered the marginal effect of the covariates on the response 
and thus may fail to fully account for some crucial covariates.
We refer to~\cite{gao2022review} and~\cite{Zhou2024} for a comprehensive review on distributed statistical inference.

In this article, we develop a distributed variable screening method for the generalized linear models.
The proposed method shares the same spirit with~\cite{hao2021optimal}  and~\cite{xu2014sparse}, 
which considers the joint effects of the covariates rather than just the marginal effect.
Thus, it has a good potential to provide more reliable screening results.
The proposed method selects relevant covariates by using a sparsity-restricted surrogate likelihood estimator.
The surrogate likelihood is an approximate of the global likelihood
but is constructed using a local dataset.
Therefore, it significantly reduces the algorithmic complexity. 
For implementation, a distributed iterative hard-thresholding algorithm is developed specifically for the generalized linear models with large $p$ and large $N$.
We prove the convergence of the proposed algorithm
and also show the sure screening property of the proposed method.
This guarantees that the true model is contained in a set of candidate models selected by
the proposed procedure with overwhelming probability.
It should be pointed out that our method borrows ideas from the communication-efficient surrogate likelihood method studied by~\cite{jordan2019communicationefficient}.
However, this is not a straightforward task, 
as our method involves challenges in both computational methods and theoretical analysis, 
primarily because the surrogate likelihood considered here is constrained by an $\ell_0$ constraint.

The rest of this article is organized as follows. 
Section~\ref{method} introduces the distributed variable screening method. 
Section~\ref{SSP} establishes the theoretical properties of the proposed method. 
Simulation studies and real data analysis are provided in Section~\ref{SRA}. 
Concluding remarks are given in Section~\ref{CR}.
The technical details are included in the Appendix.

\section{Methodology}\label{method}
\subsection{Generalized Linear Models}

Let $ y \in \mathbb{R} $ be the response of interest and $ \boldsymbol{X} =(x_{1},\dots,x_{p})^\top \in \mathbb{R}^p $ denote the covariate vector. 
Throughout this article, we assume that the response $y$ is from an exponential family distribution with the density function
\begin{equation} \label{GLM model}
f (y; \theta) = \exp \left\{ \theta y - b(\theta)  + c(y) \right\}, 
\end{equation}
where $ b(\cdot) $ and $ c(\cdot) $ are known specific functions, and $ \theta$ represents the so-called natural parameter. 
We assume that $ \theta\in \bar{\Theta}$, 
where  $ \bar{\Theta} $ denotes a convex compact subspace of $\mathbb{R}^p$ 
and $ b(\cdot) $ is twice continuously differentiable.
Under model~\eqref{GLM model}, the mean of $ y $ is $ b^{\prime}(\theta)$, the first derivative of $ b(\theta) $ with respect to $ \theta $.
Recall that $\theta$ links the relation between $y$ and $\boldsymbol{X} $.
We consider the problem of estimating a $p$-dimensional vector of parameter $ \be = (\beta_1, \ldots, \beta_p)^{\top} $ from the following generalized linear model:
\[
\mathbb{E}(y \mid \boldsymbol{X} )=b^{\prime}(\theta(\boldsymbol{X} ))=g^{-1}\Big(\sum_{j=1}^p x_{j}\beta_j\Big),
\]
where $g(x)$ is a prespecified link function.
For simplicity of presentation, we consider the canonical link, that is, $g(\theta)=(b^{\prime}(\theta))^{-1}$.
The use of the canonical link function results in a simplified form with $ \theta(\boldsymbol{X} )= \boldsymbol{X}^{\top} \be $.

Assume that we observe $ N $ independent and identically distributed (i.i.d.) copies of $(\boldsymbol{X} ,y) $, 
denoted by $(\boldsymbol{X}_i,y_i), i \in \{ 1, \ldots, N \}$.
For model~\eqref{GLM model}, the negative log-likelihood is
given by
\[  \mathcal{L}(\be) =  \sum_{i=1}^{N} \left[  b(\boldsymbol{X}_{i}^{\top} \be ) - (\boldsymbol{X}_{i}^{\top} \be )y_{i} \right].  \] 

When $N$ is moderate and $p\ll N$, we can apply the maximum likelihood estimator (MLE) to estimate $ \be $, that is,
\begin{equation} \label{GLM loss}
\widehat{\be}=\text{arg}\min_{\be\in\mathbb{R}^p}~\mathcal{L}(\be). 
\end{equation}
The asymptotic properties of the MLE defined in \eqref{GLM loss} have been extensively researched in literature. 
However, minimizing \eqref{GLM loss} becomes challenging when both $p$ and $N$ are large. 
In the subsequent sections, we address these challenges by employing a distributed variable screening method specifically designed for scenarios with large $p$ and large $N$.

\subsection{Distributed variable screening method}
In this section, we develop a ``divide and conquer'' approach to select non-zero coefficients under model~\eqref{GLM model}. 
Let $S^* = \{j : \beta_j^* \neq 0 \}$ and $q=\text{card}(S^*)$, where $ \be^*=(\beta_1^*,\dots,\beta_p^*)^\top$ is the true value of $\be$
and $ \text{card}(S) $ denotes the cardinality of any set $S$. 
Assume that $q$ is significantly smaller than the sample size $N$, 
and the data is distributed and stored across $m$ machines. 
Without loss of generality, suppose that each machine contains $n = N/m$ observations.
Let $ \mathcal{Y}  = ( \boldsymbol {Y}_1^{\top}, \ldots ,\boldsymbol {Y}_m^{\top} )^{\top}$ and $\mathcal{X} = ( \mathbb{X}_1^{\top}, \ldots , \mathbb{X}_m^{\top} )^{\top} $
denote the full samples,
where $ \boldsymbol{Y}_i=(y_{i1}, \ldots ,y_{in})^{\top} $ and $ \mathbb{X}_i  = (\boldsymbol{X}_{i1}, \ldots , \boldsymbol{X}_{in})^{\top}$.
Also let $ \mathcal{T}_i = (\boldsymbol{X}_{ij}, y_{ij}),  j \in \{ 1, \ldots , n \} $ denote the subsample stored in the $ i $th machine. 
Then we can rewrite $\mathcal{L}(\be)$ as
\[
	\mathcal{L}(\be) = \frac{1}{N} \sum_{i=1}^{m} \sum_{j=1}^{n}\left[  b(\boldsymbol{X}_{ij}^{\top} \be ) - (\boldsymbol{X}_{ij}^{\top} \be )y_{ij} \right]
	= \frac{1}{m} \sum\limits _{i=1}^{m}\mathcal{L}_{i}(\be),
\] 
where $\mathcal{L}_{i}(\be) = n^{-1}\sum_{j=1}^{n}[b(\boldsymbol{X}_{ij}^{\top} \be )-(\boldsymbol{X}_{ij}^{\top} \be )y_{ij}],~  i \in \{ 1,\ldots,m \} $.

Inspired by~\cite{jordan2019communicationefficient}, we adopt a surrogate loss function to approximate the original loss function $\mathcal{L}(\be)$. 
Specifically, we approximate $\mathcal{L}(\be)$ using the following surrogate loss function formulation:
\[
\ell(\be)=\mathcal{L}_{1}(\be) - \langle \be, \nabla\mathcal{L}_{1}(\widetilde{ \be }) -\nabla\mathcal{L}(\widetilde{ \be }) \rangle,
\]
where  $ \langle a, b \rangle=a^\top b$ denote the inner product of vectors $a$ and $b$. 

For any vector $a=(a_1,\dots,a_p)^\top$, 
let $ \| a \|_0=\sum_{j=1}^p \mathbb{I}(a_j\neq 0)$ and $ \|a\|_2=(\sum_{j=1}^p a_j^2)^{1/2}$, 
where $ \mathbb {I}(\cdot) $ is the indicator function. 
Then under the sparsity assumption of $ \be^* $, we propose a variable screening method utilizing a sparse-restricted estimator defined by 
\begin{align} \label{initial problem}
\widehat{\be}_{[k]}= \underset{\| \be \|_0 \leq k}{\arg\min}\ \mathcal{L}_{1}(\be) - \langle \be, \nabla\mathcal{L}_{1}(\widetilde{ \be }) -\nabla\mathcal{L}(\widetilde{ \be }) \rangle,
\end{align}
where $\widetilde{ \be }$ is an estimator of $ \be $ and $ k $ is a prespecified constant.
By imposing the constraint $\| \be \|_0 \leq k$, our proposed method effectively selects a subset of variables with at most $k$ non-zero coefficients.
Let $ \widehat{S}_{[k]} $ be the selected model obtained by $\widehat{\be}_{[k]}$, that is, $ \widehat{S}_{[k]} = \{1\le j\le p: \widehat \beta_{[k]j}\neq 0\} $,
where $\widehat\beta_{[k]j}$ is the $j$th element of $\widehat\be_{[k]}$.
In what follows, we choose $\widetilde \be$ as the Lasso estimator based on $\mathcal{L}_{1}(\be)$. 
Specifically, we obtain $\widetilde \be$ by
\begin{align}\label{initial:Lasso}
    \widetilde{ \be } = \underset{\be\in\mathbb{R}^p}{\arg\min} \, \mathcal{L}_{1}(\be) + \lambda \left \Vert \be \right \Vert_{1},
\end{align}
where $\lambda >0$ is a tuning parameter.

Since the $\ell_0$-constrained problem has been proven to be NP-hard~\citep{natarajan1995sparse,chen2014complexity}, 
it is a considerable challenge to directly solve~\eqref{initial problem}.
To address this challenge, we adopt an iterative hard-thresholding (IHT) method to obtain $\widehat{\be}_{[k]}$. 
We begin by considering a quadratic approximation to the objective function $\ell(\be)$ as follows:
\[  
h(\boldsymbol{\gamma}; \be)= \ell(\be) + \langle \boldsymbol{\gamma} - \be, \nabla \ell(\be) \rangle + \frac{\vartheta }{2} \Vert \boldsymbol{\gamma} - \be \Vert_2^2,  
\] 
where $\vartheta >0$ represents the tuning parameter. It is worth noting that $ h(\be; \be)=\ell(\be) $, 
and for values of $\boldsymbol{\gamma}$ close to $\be$, $h(\boldsymbol{\gamma}; \be)$ provides a good approximation to $\ell(\be)$. 
Utilizing this quadratic approximation, we derive an iterative solution to problem~\eqref{initial problem}.
Let $\widehat{ \be }^{(t)}$ be an estimator of $\be^*$ at the $t$th iteration.
We update $\widehat{ \be }^{(t)}$ by $\widehat{ \be }^{(t+1)}$, where
\[  \widehat{\be}^{(t+1)} = \underset{\Vert \boldsymbol{\gamma} \Vert_0 \leq k}{\arg\min}~ h(\boldsymbol{\gamma}; \widehat{ \be }^{(t)}).  \]
This is equivalent to
\begin{equation}  \label{true iterative problem}        
\widehat{ \be }^{(t+1)}=\underset{ \Vert \boldsymbol{\gamma} \Vert_0 \leq k}{\arg\min}~ \big\Vert \boldsymbol{\gamma} - \big( \widehat{ \be }^{(t)}- \vartheta^{-1} \nabla \ell  (\widehat{ \be }^{(t)}) \big)   \big\Vert_2^2.  
\end{equation} 
Let $ \widehat{ \boldsymbol{\gamma} }^{(t)}= \widehat{ \be }^{(t)} -\vartheta^{-1} \nabla \ell (\widehat{ \be }^{(t)})$, and $\widehat\gamma_j^{(t)}$ denote its $j$th component.
By~\eqref{true iterative problem}, the $j$th component $\widehat\beta_j^{(t+1)}$ of $\widehat\be^{(t+1)}$ takes an explicit form
\begin{align} \label{eq5}
    \widehat\beta_j^{(t+1)}=\widehat\gamma_j^{(t)}\mathbb{I}(|\widehat\gamma_j^{(t)}|>r_k),
\end{align}   
where $r_k$ is the $k$th largest component of $|\widehat{\boldsymbol{\gamma} }^{(t)}|$.

The iterative procedure is summarized in Algorithm~\ref{DIHT algorithm}.
The proposed method involves a single round of data communication, 
which makes it highly efficient and easily implementable.

\begin{algorithm}[H]
	\caption{Distributed Iterative Hard-thresholding Algorithm (DIHT)}
	\label{DIHT algorithm}
	
	\KwIn{$\widetilde{ \be }$, $k$, an initial value $ \widehat{ \be }^{(0)}$ and a tolerance parameter $\varepsilon$.}	
		
	\textbf{Step 1:} Transfer $\widetilde{ \be }$ to each local machine.

	\textbf{Step 2:} For each machine $i \in \{ 1, \ldots, m \}$, calculate $ \nabla\mathcal{L}_{i}(\widetilde{ \be })$.

    \textbf{Step 3:} Transfer $ \nabla\mathcal{L}_{i}(\widetilde{ \be })$ to the first machine and compute 
        \[  \nabla\mathcal{L}(\widetilde{ \be })=m^{-1}\sum_{i=1}^m \nabla\mathcal{L}_{i}(\widetilde{ \be }).  \] 

    \textbf{Step 4:} Computes $ \widehat{ \be }^{(t+1)} $ by~\eqref{eq5}.
	
	\textbf{Step 5:} Stop the algorithm if $ \Vert \widehat{ \be }^{(t+1)}-\widehat{ \be }^{(t)} \Vert_2 \leq \varepsilon$. 
    Otherwise, let $t\leftarrow t+1$ and return {\bf Step 4}. 

	\KwOut{$\widehat\be_{[k]}=\widehat\be^{(t+1)}$.}
\end{algorithm}

\begin{theorem} \label{thm: IHT}
If $\vartheta \geq \rho_1 \mu/n$, then
$\ell (\widehat{ \be }^{(t+1)}) \leq \ell(\widehat{ \be }^{(t)}),$
where $ \rho_1 $ denotes the largest eigenvalue of $ \mathbb{X}_1^{\top}\mathbb{X}_1 $ and $ \mu = \max_{\theta \in \bar{\Theta}} b^{\prime \prime} (\theta) $.
\end{theorem}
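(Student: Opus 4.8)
The plan is to establish the descent property by a majorization--minimization argument. I would show that the quadratic function $h(\boldsymbol{\gamma}; \widehat{\be}^{(t)})$ used in the update is a global upper bound (a majorizer) of the surrogate loss $\ell(\boldsymbol{\gamma})$ whenever $\vartheta \geq \rho_1\mu/n$, and then exploit the defining property that $\widehat{\be}^{(t+1)}$ minimizes this majorizer over the feasible set $\{\|\boldsymbol{\gamma}\|_0 \leq k\}$. Chaining these two facts through the tangency identity $h(\be;\be)=\ell(\be)$ yields the claimed inequality.

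First I would compute the Hessian of the surrogate loss. Since the term $-\langle \be, \nabla\mathcal{L}_{1}(\widetilde{\be}) - \nabla\mathcal{L}(\widetilde{\be})\rangle$ is linear in $\be$, it does not affect second derivatives, so $\nabla^2 \ell(\be) = \nabla^2 \mathcal{L}_{1}(\be) = n^{-1}\sum_{j=1}^{n} b''(\boldsymbol{X}_{1j}^{\top}\be)\,\boldsymbol{X}_{1j}\boldsymbol{X}_{1j}^{\top}$. Invoking $b''(\theta) \leq \mu$ for $\theta \in \bar{\Theta}$, this Hessian is dominated in the positive semidefinite order by $(\mu/n)\,\mathbb{X}_1^{\top}\mathbb{X}_1$, whose largest eigenvalue equals $\rho_1\mu/n$. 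Hence for every vector $v$ one has $v^{\top}\nabla^2\ell(\be)\,v \leq (\rho_1\mu/n)\|v\|_2^2 \leq \vartheta\|v\|_2^2$ under the stated condition on $\vartheta$.

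Next I would apply a second-order Taylor expansion of $\ell$ about $\widehat{\be}^{(t)}$ with the mean-value form of the remainder and substitute the eigenvalue bound, giving $\ell(\boldsymbol{\gamma}) \leq \ell(\widehat{\be}^{(t)}) + \langle \boldsymbol{\gamma} - \widehat{\be}^{(t)}, \nabla\ell(\widehat{\be}^{(t)})\rangle + \frac{\vartheta}{2}\|\boldsymbol{\gamma} - \widehat{\be}^{(t)}\|_2^2 = h(\boldsymbol{\gamma}; \widehat{\be}^{(t)})$ for all $\boldsymbol{\gamma}$, which is the majorization property; in particular $h(\widehat{\be}^{(t)};\widehat{\be}^{(t)})=\ell(\widehat{\be}^{(t)})$. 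Because $\widehat{\be}^{(t+1)}$ minimizes $h(\boldsymbol{\gamma}; \widehat{\be}^{(t)})$ over $\|\boldsymbol{\gamma}\|_0 \leq k$ while $\widehat{\be}^{(t)}$ is itself feasible, we obtain $h(\widehat{\be}^{(t+1)}; \widehat{\be}^{(t)}) \leq h(\widehat{\be}^{(t)}; \widehat{\be}^{(t)})$. Combining the majorization at $\boldsymbol{\gamma}=\widehat{\be}^{(t+1)}$ with this minimization step gives $\ell(\widehat{\be}^{(t+1)}) \leq h(\widehat{\be}^{(t+1)}; \widehat{\be}^{(t)}) \leq h(\widehat{\be}^{(t)}; \widehat{\be}^{(t)}) = \ell(\widehat{\be}^{(t)})$, as desired.

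The step I expect to be the main obstacle is justifying the Hessian upper bound uniformly along the segment joining $\widehat{\be}^{(t)}$ to $\boldsymbol{\gamma}$: the mean-value remainder involves $b''(\boldsymbol{X}_{1j}^{\top}\xi)$ at an intermediate point $\xi$, so one must ensure the linear predictors $\boldsymbol{X}_{1j}^{\top}\xi$ remain in $\bar{\Theta}$ for the bound $b'' \leq \mu$ to apply. This is exactly where the convexity and compactness of $\bar{\Theta}$, together with the model's natural-parameter domain assumption, are needed; once that is in place, the remaining arithmetic is routine.
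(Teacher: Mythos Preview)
Your proposal is correct and follows essentially the same majorization--minimization argument as the paper: Taylor-expand $\ell$ with the mean-value remainder, bound the Hessian via $b''\le\mu$ and the largest eigenvalue $\rho_1$ of $\mathbb{X}_1^\top\mathbb{X}_1$ to obtain $\ell(\boldsymbol{\gamma})\le h(\boldsymbol{\gamma};\widehat{\be}^{(t)})$, and then use the optimality of $\widehat{\be}^{(t+1)}$ together with the feasibility of $\widehat{\be}^{(t)}$ and the tangency $h(\be;\be)=\ell(\be)$. The only point you flag that the paper does not explicitly address is the need for the intermediate linear predictors to lie in $\bar{\Theta}$; the paper simply invokes $\mu=\max_{\theta\in\bar{\Theta}}b''(\theta)$ without further comment, relying implicitly on the standing assumption that $\theta\in\bar{\Theta}$.
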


Theorem~\ref{thm: IHT} shows that by appropriately selecting $\vartheta$,
the proposed procedure improves the current estimate within the feasible region ($\|\be\|_0\le k$).
In addition, Theorem~\ref{thm: IHT} indicates that $\vartheta$ should be large enough to ensure a decrease in $\ell(\be)$ after each iteration.
When $Y_i$ is sampled from a normal distribution, the second derivative of the canonical link function $b^{\prime \prime}(\theta)$ is equal to 1. Therefore, we can set $\vartheta=\rho_1$. 
In the case of logistic regression, a reasonable choice for $\vartheta$ is $\rho_1/4$, as the second derivative of the canonical link function is equal to 1/4.
Inspired by~\cite{xu2014sparse}, we apply an adaptive procedure to select $\vartheta$.
Specifically, we first choose a small initial value for $\vartheta$, 
and then check whether $\ell(\widehat{ \be }^{(t+1)}) \leq \ell(\widehat{ \be }^{(t)})$ or not. 
If this is violated, we double the current $\vartheta$ until $\ell(\widehat{ \be }^{(t+1)}) \leq \ell(\widehat{ \be }^{(t)})$ is met. 
It is trivial that the adaptive strategy does not alter the convergence of the proposed procedure.

The following theorem establishes the convergence property of the proposed algorithm.

\begin{theorem} \label{thm: convergence}
Let $\widehat{ \be }^{(t)}$ be the sequence generated by Algorithm~\ref{DIHT algorithm}. 
If the conditions stated in Theorem~\ref{thm: IHT} are satisfied, 
then there exists at least one subsequence of $\widehat{ \be }^{(t)}$ that is convergent as $ t \rightarrow \infty$. 
In addition, Algorithm~\ref{DIHT algorithm} stops with a finite number of steps.
\end{theorem}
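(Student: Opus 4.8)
The plan is to combine the monotone descent furnished by Theorem~\ref{thm: IHT} with a boundedness argument for the iterates and a quantified version of the descent that forces consecutive steps to vanish. First I would record that, by Theorem~\ref{thm: IHT}, the objective sequence $\{\ell(\widehat{\be}^{(t)})\}$ is non-increasing. Because each $b(\cdot)$ is convex and $\bar{\Theta}$ is compact, the surrogate loss $\ell(\be)=\mathcal{L}_{1}(\be)-\langle\be,\nabla\mathcal{L}_{1}(\widetilde{\be})-\nabla\mathcal{L}(\widetilde{\be})\rangle$ is bounded below on the feasible region, so these values decrease to a finite limit $\ell^{\star}$.

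Next I would show that every iterate lies in the sublevel set $\{\be:\|\be\|_0\le k,\ \ell(\be)\le\ell(\widehat{\be}^{(0)})\}$ and that this set is bounded. On each coordinate subspace indexed by a support $S$ with $|S|\le k$, the Hessian of $\mathcal{L}_{1}$, namely $n^{-1}\sum_{j}b^{\prime\prime}(\boldsymbol{X}_{1j}^{\top}\be)\boldsymbol{X}_{1j}\boldsymbol{X}_{1j}^{\top}$, has its restricted eigenvalues bounded below once $b^{\prime\prime}$ is bounded away from zero over $\bar{\Theta}$ and the sub-design has full column rank; hence the restriction of $\ell$ to $\mathbb{R}^{S}$ is strongly convex plus a linear term and therefore coercive. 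Taking the union over the finitely many supports shows $\{\widehat{\be}^{(t)}\}$ is bounded. Since $\{\be:\|\be\|_0\le k\}$ is a finite union of subspaces and thus closed, Bolzano--Weierstrass yields a convergent subsequence whose limit remains feasible.

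For finite termination I would derive a quantified descent. The update $\widehat{\be}^{(t+1)}$ minimizes the quadratic surrogate $h(\cdot;\widehat{\be}^{(t)})$ over the feasible set while $\widehat{\be}^{(t)}$ is feasible, so $h(\widehat{\be}^{(t+1)};\widehat{\be}^{(t)})\le h(\widehat{\be}^{(t)};\widehat{\be}^{(t)})=\ell(\widehat{\be}^{(t)})$, which gives $\langle\Delta^{(t)},\nabla\ell(\widehat{\be}^{(t)})\rangle+\tfrac{\vartheta}{2}\|\Delta^{(t)}\|_2^2\le 0$ with $\Delta^{(t)}=\widehat{\be}^{(t+1)}-\widehat{\be}^{(t)}$. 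Because the Hessian of $\ell$ coincides with that of $\mathcal{L}_{1}$, the loss $\ell$ is $L$-smooth with $L=\rho_1\mu/n$; combining the descent lemma with the previous inequality yields $\ell(\widehat{\be}^{(t)})-\ell(\widehat{\be}^{(t+1)})\ge\tfrac{\vartheta-L}{2}\|\Delta^{(t)}\|_2^2$. Telescoping and invoking the lower bound $\ell^{\star}$ gives $\sum_{t}\|\Delta^{(t)}\|_2^2\le\tfrac{2}{\vartheta-L}\big(\ell(\widehat{\be}^{(0)})-\ell^{\star}\big)<\infty$, so $\|\Delta^{(t)}\|_2\to 0$. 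Consequently, for any tolerance $\varepsilon>0$ the stopping rule $\|\widehat{\be}^{(t+1)}-\widehat{\be}^{(t)}\|_2\le\varepsilon$ holds for all sufficiently large $t$, and the algorithm halts after finitely many steps.

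The hard part will be establishing boundedness of the iterates, because for models such as logistic regression $b^{\prime\prime}$ decays to zero as the linear predictor diverges, so global strong convexity fails; the argument must lean on the compactness of $\bar{\Theta}$ to secure a uniform positive lower bound on the restricted eigenvalues of the Hessian, together with a full-rank condition on the sub-design matrices. A secondary subtlety is that the quantified descent requires $\vartheta$ strictly larger than $\rho_1\mu/n$, whereas Theorem~\ref{thm: IHT} only assumes the non-strict inequality; the adaptive doubling of $\vartheta$ described after Theorem~\ref{thm: IHT} supplies the strict gap $\vartheta-L>0$ needed for the summability step.
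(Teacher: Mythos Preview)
Your finite-termination argument is essentially the paper's: both derive the quantified descent $\ell(\widehat{\be}^{(t)})-\ell(\widehat{\be}^{(t+1)})\ge\tfrac{1}{2}(\vartheta-\rho_1\mu/n)\|\Delta^{(t)}\|_2^2$, telescope it, and conclude $\min_{t\le K}\|\Delta^{(t)}\|_2^2=O(1/K)$ so the stopping rule fires after finitely many steps; the paper, like you, implicitly needs the strict gap $\vartheta>\rho_1\mu/n$. Where you differ is the convergent-subsequence claim. You argue coercivity of $\ell$ on each $k$-sparse coordinate subspace (via $b''$ bounded away from zero on $\bar{\Theta}$ and full column rank of the sub-designs), hence boundedness of the iterates, and then invoke Bolzano--Weierstrass directly. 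The paper instead never establishes boundedness of the iterates: it extracts a subsequence along which $\rho^{(t)}=\max_j b''(\boldsymbol{X}_{1j}^\top\overline{\be})$ converges, uses the quantified descent to get $\|\widehat{\be}^{(l+1)}-\widehat{\be}^{(l)}\|_2\to 0$ along that subsequence, argues that the discrete supports $\widehat{M}^{(l)}$ must therefore stabilize, and then appeals to convergence of gradient descent on the resulting fixed convex subproblem. Your route is more elementary but imports the extra full-rank hypothesis you flag; the paper's route avoids that hypothesis at the cost of the more delicate support-stabilization step.
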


\section{Sure screening property}\label{SSP}
In this section, we establish the sure screening property of the proposed method. 
All of the proofs are provided in the Appendix.
To do that, we fix some notation. 
For any subset $S$ of $ \{1, \ldots , p\} $, 
let $ \boldsymbol{X}_S =(x_j , j \in S)^{\top} $ and $ \be_S =(\beta_j , j \in S)^\top$. 
For any two positive sequences $ a_n $ and $ b_n $, 
we write $ a_n \gtrsim b_n $ and $ a_n \lesssim b_n $ if there exists a constant $ C > 0 $ independent of $ n $ 
such that $ a_n \ge C b_n $  and $ a_n \le C b_n $ for all sufficiently large $ n $, respectively. 
Moreover, we write $ a_n=o(b_n) $ if $ a_n/b_n\rightarrow 0$ as $n\rightarrow \infty$,
and $ a_n=O(b_n) $ if there exists a universal constant such that $a_n\le Cb_n$ for all $n$.

We make the following conditions.
\begin{enumerate}[label=(C\arabic*), series=(C)]
\item $ \ln p = O(N^\alpha) $ for some $ 0 < \alpha < 1 $.

\item There exist some positive constants $w_1, w_2, \tau_1$ and $\tau_2$ such that
\[
	\min_{j \in S^{*}} \left| \beta_{j}^{*} \right| \geq w_{1} N^{-\tau_{1}} \quad \text { and } \quad q<k \leq w_{2} N^{\tau_{2}}.
\] 

\item There exist some positive constants $ c_1 $ and $ \delta_1$ such that for sufficiently large $ n $,
\[  \phi_{\min} \left[ \mathcal{H} (\be_{S}) \right] \geq c_{1}  \]  
for any $ \be_S \in \left\lbrace  \be_S : \Vert \be_S - \be_S^* \Vert_2 \leq \delta_1 \right\rbrace  $ and $S\in \mathcal{M}_{+}^{2k} $, 
where $\mathcal{M}_{+}^{a}=\left\{S: S^{*} \subset S , \Vert S \Vert_{0} \leq a\right\}$ denotes the collection of the over-fitted models with size less than $a$,
$ \phi_{\min}(A)$ is the smallest eigenvalue of any matrix $A$, 
and 
\[  \mathcal{H} (\be_{S}) = \frac{\partial^2 \ell (\be_{S})}{\partial \be_{S} \partial \be_{S}^{\top}} 
= \frac{1}{n} \sum_{j=1}^{n} \boldsymbol{X}_{1jS} \, b^{\prime \prime}\left(\boldsymbol{X}_{1jS}^{\top} \be_{S} \right)\boldsymbol{X}_{1jS}^{\top}  \] 
is the Hessian matrix of $ \ell (\cdot) $ corresponding to $ \be_{S} $.

\item  There exist some positive constants $ c_2$  and $ c_3 $ such that $ \vert x_{ij,l} \vert \leq c_2 $
and  
\[  \max _{1 \leq i \leq m} \max_{1 \leq j \leq n}  \max_{1 \leq l \leq p}  
\left\{ \frac{x_{i j, l}^{2}}{\sum_{i=1}^{m} \sum_{j=1}^{n} x_{ij,l}^{2} \sigma_{ij}^{2}} \right\} \leq c_{3}  N^{-1}  \] 
for sufficiently large $ N $,
where $ \sigma_{ij}^2 = b^{\prime \prime}(\boldsymbol{X}_{ij}^{\top} \be^*) $
and  $x_{ij,  l} $ is the $ l $th component of $ \boldsymbol{X}_{ij} $.
\end{enumerate}

\begin{enumerate}
	\item[(C5)]  There exists a positive constant $c_4$  such that for sufficiently large $ n $,
		\[  \boldsymbol{w}^\top \mathcal{H}(\be) \boldsymbol{w} \geq  c_4 \Vert  \boldsymbol{w}_{S^{*}} \Vert_{2}^{2}  \]
		for any $ \boldsymbol{w} \neq 0 $ satisfying $ \Vert \boldsymbol{w}_{S^{*}_c} \Vert_{1} \le  3 \Vert \boldsymbol{w}_{S^{*}} \Vert_{1} $ and for any $ \be $
		such that $ \theta = \boldsymbol{X}^\top \be \in \bar{\Theta} $,
		where $ \mathcal{H}(\be) $ is the Hessian matrix of $ \ell(\be) $ and $ S^{*}_c=\{1\le j\le p: j\not\in  S^{*}\}$.
\end{enumerate}

Condition (C1) assumes that the number of covariates is allowed to grow exponentially with respect to the sample size $N$.
Condition (C2) imposes a requirement on the magnitude of non-zero coefficients, 
which is necessary to achieve the sure screening property. 
In simple terms, if certain non-zero coefficients decrease too rapidly and approach zero, 
it becomes challenging to distinguish them from the noise. 
The second part of (C2) limits the level of sparsity in the true coefficient vector $\beta^*$, 
but still permits the size of the true model to increase with the sample size $N$.
Condition (C3) is the uniform uncertainty principle (UUP) condition given by~\cite{candes2007dantzig}, 
which is mild and has been widely used in the literature of high-dimensional methods 
(see, e.g., \citep{chen2012extended,xu2014sparse}).
Condition (C3), together with (C2), ensures the identifiability of $ S^* $ over $ S$ for $\text{card}(S)\le k$
with a proper choice of $k$.
Condition (C4) is also employed by~\cite{xu2014sparse},
which is adopted to simplify the proofs of the following theorems.
Condition (C5) is the restricted eigenvalue condition~\citep{bickel2009simultaneous,xu2014sparse}, 
which ensures a desirable bound on the Lasso estimator. 
We refer to~\cite{bickel2009simultaneous} for more discussions and other similar conditions.

\begin{theorem} \label{thm: screening property}	
Suppose that  Conditions (C1)-(C5) hold, and $ \widetilde{ \be } $ is the Lasso estimator defined in~\eqref{initial:Lasso} with $ \lambda = O(q \sqrt{\ln p/n}) $. 
	If $ \tau_1 + \tau_2 < (1 - \alpha)/2 $ and $ n \gtrsim N^{(2/3) \tau_1 + 2\tau_2 + \alpha} $, 
	then we have
	\[  \Pr(S^* \subset \widehat{S}_{[k]})  \rightarrow 1 ~ \text{as}~ N \rightarrow \infty.  \] 
\end{theorem}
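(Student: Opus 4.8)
The plan is to establish the sure screening property by showing that the true model $S^*$ must be contained in the selected model $\widehat{S}_{[k]}$ with probability tending to one. The key idea is that if $S^*$ were \emph{not} contained in $\widehat{S}_{[k]}$, then the surrogate loss at $\widehat{\be}_{[k]}$ would necessarily exceed the surrogate loss at a feasible comparison point that does contain $S^*$, contradicting the optimality of $\widehat{\be}_{[k]}$ for problem~\eqref{initial problem}. To make this rigorous, I would first control the quality of the initial Lasso estimator $\widetilde{\be}$. Under Conditions (C1), (C4), and the restricted eigenvalue Condition (C5), together with the prescribed rate $\lambda = O(q\sqrt{\ln p / n})$, standard Lasso oracle-inequality arguments (e.g.\ \cite{bickel2009simultaneous}) yield a high-probability $\ell_2$-bound of the form $\|\widetilde{\be} - \be^*\|_2 \lesssim q\sqrt{\ln p / n}$. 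The rate condition $n \gtrsim N^{(2/3)\tau_1 + 2\tau_2 + \alpha}$ is precisely what forces this estimation error to be small enough relative to the signal strength $w_1 N^{-\tau_1}$ in Condition (C2).

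Next I would analyze the surrogate loss $\ell(\be) = \mathcal{L}_1(\be) - \langle \be, \nabla\mathcal{L}_1(\widetilde{\be}) - \nabla\mathcal{L}(\widetilde{\be})\rangle$. The crucial observation is that the gradient of $\ell$ at $\widetilde{\be}$ equals $\nabla\mathcal{L}(\widetilde{\be})$, the \emph{full-sample} gradient, so $\ell$ inherits good approximation behavior from the global likelihood near $\widetilde{\be}$ while being computable from a single machine. I would quantify the discrepancy $\|\nabla\mathcal{L}_1(\widetilde{\be}) - \nabla\mathcal{L}(\widetilde{\be})\|$ and, more importantly, the deviation of $\nabla\ell(\be^*)$ from zero, using a concentration argument on the score. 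Condition (C4) provides the normalization that makes a Bernstein-type or self-normalized tail bound applicable, and Condition (C1) controls the union bound over the $p$ coordinates; the payoff is a bound of order $\sqrt{\ln p / N}$ on the relevant component of the empirical score, uniformly over coordinates. Condition (C3) (the UUP / restricted strong convexity) then lets me convert a small gradient into a small estimation error by bounding $\phi_{\min}[\mathcal{H}(\be_S)] \geq c_1$ over the relevant over-fitted models $S \in \mathcal{M}_+^{2k}$.

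The core of the argument is then the following comparison. Let $\widehat{S}_{[k]}$ be the support of $\widehat{\be}_{[k]}$, and suppose for contradiction that $S^* \not\subset \widehat{S}_{[k]}$, so some signal coordinate is omitted. On the one hand, restricting attention to models of size at most $k$, restricted strong convexity (Condition (C3)) combined with Condition (C2) forces any $\be$ supported off some element of $S^*$ to incur an excess surrogate loss at least on the order of the squared signal, namely $c_1 (w_1 N^{-\tau_1})^2$. On the other hand, the optimality of $\widehat{\be}_{[k]}$ guarantees $\ell(\widehat{\be}_{[k]}) \leq \ell(\be^\dagger)$ for any feasible $\be^\dagger$ with $\|\be^\dagger\|_0 \leq k$ containing $S^*$; choosing $\be^\dagger$ to be a suitable projection of $\be^*$ or a one-step estimate, and using the score bound and the Lasso bound above, the attainable surrogate loss gap is at most of order $q^2 \ln p / n$ plus lower-order terms. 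The sure screening conclusion follows by showing these two quantities are incompatible: the exponents $\tau_1 + \tau_2 < (1-\alpha)/2$ and the sample-size condition on $n$ guarantee that the signal-driven lower bound dominates the noise-driven upper bound, so the contradiction hypothesis fails with probability tending to one.

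I expect the main obstacle to be the second step — obtaining a sharp enough, \emph{uniform-over-supports} bound on $\nabla\ell(\be^*)$ and on the surrogate-induced error. The difficulty is twofold: the $\ell_0$ constraint means I cannot appeal to smooth first-order optimality conditions directly, so the comparison must be phrased combinatorially over the (exponentially many) candidate supports of size $k$, and the surrogate construction couples the local Hessian of $\mathcal{L}_1$ with the global gradient of $\mathcal{L}$, so the approximation error must be tracked carefully to ensure the per-machine sample size $n$ (rather than the full $N$) governs the dominant error term. Balancing these rates against the signal strength is exactly what dictates the stated condition $n \gtrsim N^{(2/3)\tau_1 + 2\tau_2 + \alpha}$, and getting the factor $(2/3)\tau_1$ to come out correctly — rather than the naive $\tau_1$ — is where the delicate accounting lies.
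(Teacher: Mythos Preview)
Your proposal is correct and follows essentially the same route as the paper: control the Lasso error $\|\widetilde{\be}-\be^*\|_2$, decompose $\nabla\ell(\be^*_{S'})$ into Lasso-driven pieces $\nabla\mathcal{L}(\widetilde{\be}_{S'})-\nabla\mathcal{L}(\be^*_{S'})$, $\nabla\mathcal{L}_1(\widetilde{\be}_{S'})-\nabla\mathcal{L}_1(\be^*_{S'})$ and an empirical-score piece $\nabla\mathcal{L}(\be^*_{S'})$ handled by an exponential concentration bound, then use the restricted strong convexity of Condition~(C3) together with Condition~(C2) to force any underfitted $\be$ to incur excess surrogate loss of order $c_1 w_1^2 N^{-2\tau_1}$, and finish with a union bound over the at most $p^k$ candidate supports. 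The paper phrases the comparison as $\Pr\{\min_{S\in\mathcal{M}_-^k}\ell(\widehat{\be}_S)\le\max_{S\in\mathcal{M}_+^k}\ell(\widehat{\be}_S)\}\to 0$ and simply takes $\be^\dagger=\be^*$ (which is feasible since $q<k$) as the comparison point, but this is exactly your contradiction argument in different clothing.
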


The proof of Theorem~\ref{thm: screening property} is given in the Appendix.
Theorem~\ref{thm: screening property} states that the proposed method enjoys the sure screening property,
that is, with probability tending to 1, all of the important variables are retained in the selected model $ \widehat{S}_{[k]} $.
Theorem~\ref{thm: screening property} requires the condition $ n \gtrsim N^{(2/3) \tau_1 + 2\tau_2 + \alpha} $,
which implies that the number of node machines $m$ satisfies $ m \lesssim N^{1-((2/3) \tau_1 + 2\tau_2 + \alpha)} $.
If $ \tau_1=0 $ and $ \tau_2=0 $, then $m$ can be of order $ N^{1-\alpha}$ ($0< \alpha<1$).
Thus, the proposed method still works with a large number of node machines.
However, we also observe that as the number of covariates $p$ increases, the number of node machines decreases. 
This is primarily due to the fact that the proposed method necessitates a reliable initial value $\widetilde\be$ 
in order to ensure the sure screening property.

Next, we establish the sure screening property of Algorithm~\ref{DIHT algorithm}. 
Let $ \widehat S^{(t)} = \{1\le j\le p: \widehat{ \beta }_{j}^{(t)} \neq 0 \} $. 
A good initial setup for $\widehat\be^{(0)}$ increases the chances of Algorithm~\ref{DIHT algorithm} 
reaching a local minimizer with the desired sure screening property. 
To achieve this, we select the Lasso estimator $\widetilde{\be}$ defined in~\eqref{initial:Lasso} 
as the initial value $\widehat\be^{(0)}$.

\begin{theorem} \label{thm: DIHT screening property}
Suppose that Conditions (C1), (C2), (C4) and (C5) hold, 
and $ \widetilde{ \be } $ is the Lasso estimator defined in~\eqref{initial:Lasso} with 
$ \lambda \sqrt{n/N^\alpha}  \to \infty $ and $ \lambda N^{\tau_1 + \tau_2} \to 0$.
If $\vartheta\geq k $, $\vartheta  N^{- \tau_1} \to 0 $ and $\vartheta  N^{(1-\alpha-4\tau_1)/4 } \to \infty $,
then for some finite $ t \geq 1 $, we have
\[  \Pr(S^{*} \subset \widehat S^{(t)}) \to 1~~\text{as}~~n \rightarrow \infty.  \] 
\end{theorem}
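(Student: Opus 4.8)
The plan is to reduce the event $\{S^* \subset \widehat S^{(t)}\}$ to a coordinatewise comparison of the thresholding vector $\widehat{\boldsymbol{\gamma}}^{(t-1)} = \widehat\be^{(t-1)} - \vartheta^{-1}\nabla\ell(\widehat\be^{(t-1)})$ defined before~\eqref{eq5}. By the hard-thresholding rule~\eqref{eq5}, a coordinate $j$ lies in $\widehat S^{(t)}$ iff $|\widehat\gamma_j^{(t-1)}|$ ranks among the $k$ largest entries of $|\widehat{\boldsymbol{\gamma}}^{(t-1)}|$, so it suffices to show that with probability tending to one $\min_{j\in S^*}|\widehat\gamma_j^{(t-1)}| > \max_{j\notin S^*}|\widehat\gamma_j^{(t-1)}|$ for some finite $t$, since $q<k$ then forces every signal coordinate into the top $k$. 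Writing $\widehat\gamma_j^{(t-1)} - \beta_j^* = (\widehat\beta_j^{(t-1)} - \beta_j^*) - \vartheta^{-1}\partial_j\ell(\be^*) - \vartheta^{-1}[\nabla\ell(\widehat\be^{(t-1)}) - \nabla\ell(\be^*)]_j$ and controlling the last term through the Hessian $\mathcal{H}$ (note that the surrogate correction is linear in $\be$, so $\nabla^2\ell = \mathcal{H}$), the comparison follows once I control (i) the estimation error $\|\widehat\be^{(t-1)} - \be^*\|$ and (ii) the surrogate gradient at the truth $\|\nabla\ell(\be^*)\|_\infty$, and play both against the signal floor $\min_{j\in S^*}|\beta_j^*| \ge w_1 N^{-\tau_1}$ from (C2).

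For ingredient (ii) I would use the decomposition $\nabla\ell(\be^*) = \nabla\mathcal{L}(\be^*) + \{[\nabla\mathcal{L}_1(\be^*) - \nabla\mathcal{L}_1(\widetilde\be)] - [\nabla\mathcal{L}(\be^*) - \nabla\mathcal{L}(\widetilde\be)]\}$, exploiting the communication-efficient structure: the leading first-order piece is the global gradient $\nabla\mathcal{L}(\be^*)$, whose sup-norm concentrates at rate $\sqrt{\ln p/N}$ by a Bernstein-type bound under (C1) and (C4), while the bracketed remainder is second order, equal by the mean value theorem to the difference between the local Hessian $\mathcal{H}$ and the global Hessian $\nabla^2\mathcal{L}$ applied to $\widetilde\be - \be^*$, whose magnitude is governed by the local-versus-global Hessian deviation of order $\sqrt{\ln p/n}$ times the Lasso accuracy $\|\widetilde\be - \be^*\|_1$. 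That accuracy is ingredient (i): under the restricted eigenvalue condition (C5) and the choice $\lambda\sqrt{n/N^\alpha}\to\infty$, which dominates $\|\nabla\mathcal{L}_1(\be^*)\|_\infty$ with high probability, standard Lasso theory yields $\|\widetilde\be - \be^*\|_1 \lesssim q\lambda$ and $\|\widetilde\be - \be^*\|_2 \lesssim \sqrt{q}\lambda$; since $\widehat\be^{(0)} = \widetilde\be$ this also seeds the algorithm inside the region where (C5) is usable, and the condition $\lambda N^{\tau_1+\tau_2}\to 0$ places the initial error below the signal floor up to constants.

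The core of the argument is the per-iteration contraction of the IHT map. Using that $\widehat\be^{(t+1)} = H_k(\widehat{\boldsymbol{\gamma}}^{(t)})$ is the best $k$-sparse approximation of $\widehat{\boldsymbol{\gamma}}^{(t)}$ and that $\be^*$ is $q$-sparse with $q<k$, I would combine the projection inequality $\|\widehat\be^{(t+1)} - \widehat{\boldsymbol{\gamma}}^{(t)}\|_2 \le \|\be^* - \widehat{\boldsymbol{\gamma}}^{(t)}\|_2$ with the expansion $\widehat{\boldsymbol{\gamma}}^{(t)} - \be^* = (I - \vartheta^{-1}\mathcal{H})(\widehat\be^{(t)} - \be^*) - \vartheta^{-1}\nabla\ell(\be^*)$ to derive a recursion $\|\widehat\be^{(t+1)} - \be^*\|_2 \le \kappa\,\|\widehat\be^{(t)} - \be^*\|_2 + \vartheta^{-1}b_N$, where $b_N$ is the restricted gradient bound from (ii) and $\kappa<1$ comes from the restricted strong convexity in the Lasso cone supplied by (C5) together with the restricted smoothness underlying Theorem~\ref{thm: IHT}. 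Here the relevant smoothness over $k$-sparse directions is of order $k$ rather than $\rho_1$, so the descent-guaranteeing requirement $\vartheta\gtrsim\rho_1\mu/n$ is encoded through $\vartheta\ge k$. Unrolling this geometric recursion from the Lasso initializer shows that after finitely many steps $t$ one has $\|\widehat\be^{(t)} - \be^*\|_\infty \le \|\widehat\be^{(t)} - \be^*\|_2 \lesssim b_N/(\vartheta(1-\kappa))$, and the twin conditions $\vartheta N^{-\tau_1}\to 0$ and $\vartheta N^{(1-\alpha-4\tau_1)/4}\to\infty$ are exactly what render this residual error, together with $\vartheta^{-1}\|\nabla\ell(\be^*)\|_\infty$, of smaller order than $w_1 N^{-\tau_1}$; the coordinatewise comparison of the first paragraph then holds and $S^*\subset\widehat S^{(t)}$ with probability tending to one.

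The step I expect to be the main obstacle is the contraction estimate, for two reasons. First, hard thresholding is non-convex, and the naive projection inequality only gives a factor of two that does not by itself yield $\kappa<1$; extracting genuine contraction requires invoking restricted strong convexity and smoothness along the $(k+q)$-sparse error directions and verifying that $\widehat\be^{(t)}-\be^*$ stays inside the cone $\|\boldsymbol{w}_{S^*_c}\|_1 \le 3\|\boldsymbol{w}_{S^*}\|_1$ on which (C5) is stated. Second, (C5) and the Hessian control are only local, holding for $\be$ with $\boldsymbol{X}^\top\be\in\bar\Theta$ and near $\be^*$, so I would need an inductive argument showing the iterates never leave this neighborhood — initialized by the Lasso accuracy and propagated by the contraction itself — before the final balance of rates can be applied.
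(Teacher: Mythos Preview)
Your high-level reduction (compare $\min_{j\in S^*}|\widehat\gamma_j^{(t-1)}|$ against $\max_{j\notin S^*}|\widehat\gamma_j^{(t-1)}|$) and your decomposition of $\nabla\ell(\be^*)$ into a global-gradient piece plus a local-versus-global Hessian correction are both in line with the paper. The divergence is in the ``core'' you single out: you propose an $\ell_2$ contraction $\|\widehat\be^{(t+1)}-\be^*\|_2\le\kappa\|\widehat\be^{(t)}-\be^*\|_2+\vartheta^{-1}b_N$ with $\kappa<1$, and you correctly flag that extracting $\kappa<1$ from hard thresholding and keeping the iterates in the RE cone of (C5) are the hard parts. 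The paper simply bypasses both difficulties.

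The paper works entirely in $\ell_\infty$ and never needs a contraction factor. Its induction is just: if $\|\widehat\be^{(t-1)}-\be^*\|_\infty=o_p(d)$ (with $d=\min_{j\in S^*}|\beta_j^*|$), then $\|\widehat{\boldsymbol\gamma}^{(t-1)}-\be^*\|_\infty\le\|\widehat\be^{(t-1)}-\be^*\|_\infty+\vartheta^{-1}\|\nabla\ell(\widehat\be^{(t-1)})\|_\infty$, and the second term is handled by the elementary bound $\|\nabla^2\mathcal L_1(\bar\be)(\widehat\be^{(t-1)}-\be^*)\|_\infty\le 2\mu c_2^2 k\|\widehat\be^{(t-1)}-\be^*\|_\infty$ coming only from (C4) (bounded entries, $b''\le\mu$) and the $2k$-sparsity of the error. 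Dividing by $\vartheta\ge k$ makes this $o_p(d)$; the analogous terms involving $\widetilde\be-\be^*$ are controlled via its $\ell_1$ bound from Lemma~\ref{lemma 2}; and $\vartheta^{-1}\|\nabla\mathcal L(\be^*)\|_\infty$ is $o_p(d)$ by the exponential inequality together with $\vartheta N^{-\tau_1}\to0$ and $\vartheta N^{(1-\alpha-4\tau_1)/4}\to\infty$. Once $\|\widehat{\boldsymbol\gamma}^{(t-1)}-\be^*\|_\infty=o_p(d)$, hard thresholding keeps all of $S^*$ and $\|\widehat\be^{(t)}-\be^*\|_\infty\le\|\widehat{\boldsymbol\gamma}^{(t-1)}-\be^*\|_\infty=o_p(d)$ closes the induction. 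No RE cone is invoked after the Lasso initialization, and no $\kappa<1$ is ever produced; for any fixed finite $t$ the accumulated $o_p(d)$ terms remain $o_p(d)$.

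So the obstacle you anticipate is real for your route but is an artifact of choosing $\ell_2$ and aiming for geometric decay. Switching to the $\ell_\infty$ argument above removes it, and the condition $\vartheta\ge k$ is used not to encode restricted smoothness in an RIP sense but simply to cancel the factor $k$ arising from the entrywise Hessian bound.
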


In practice, it is crucial to select an appropriate value of $k$ in order to ensure the property of sure screening.
For this, we consider the extended Bayesian information criterion (EBIC). 
Specifically,  for each $k \in \{ 1,2,\ldots,K\} $,  we carry out the proposed procedure and obtain $\widehat S_{[k]}$,
where $K$ is a prespecified constant.
Then we determine $k$ by minimizing 
\[  
\text{EBIC}(k) = \ell(\widehat{\be}_{[k]}) + \frac{1}{N} k (\ln N + 0.5 \ln p).
\]
Under mild regularity conditions, ~\cite{chen2012extended} established the selection consistency of EBIC under the ultrahigh dimensional generalized linear models. 
This, together with Theorem~\ref{thm: screening property} and Theorem~\ref{thm: DIHT screening property}, guarantees the selection consistency of our screening procedure. 

\section{Numerical Studies}\label{SRA}
\subsection{Simulation studies}
In this section, we evaluate the finite sample performance of the proposed method through simulation studies.
We compare the proposed distributed variable screening (DVS) method  with the aggregated correlation screening~\cite{JMLR:v21:19-537}. 
We consider the Pearson correlation, Kendall's $ \tau $ rank correlation, SIRS correlation and distance correlation (DC)
to perform the aggregated correlation screening; see~\cite{JMLR:v21:19-537} for more technical details.
The number of machines is taken as $m = 10, 30$ and $ 50$.
In addition, we set $N=3000$ and $p=6000$.

We assess the performance of the screening methods considered here based on $T = 100$ simulation replications.
For any specific method, let $ \widehat{S}(t)$ denote the model selected in the $ t $th replication.
The following criteria are used to compare the performance of different methods:
\begin{itemize}
    \item[SC]: the proportion of the submodel that contains all active covariates, defined as $\text{SC}= T^{-1} \sum_{t=1}^{T} \mathbb{I} ( S^{*} \subset \widehat{S}(t) )$;
    \item[PSR]: the positive selection rate, defined as $\text{PSR} = T^{-1} \sum_{t=1}^{T}\Vert S^{*} \cap \widehat{S}(t)  \Vert_{0}/\|S^{*} \|_{0}$;
    \item[FDR]: the false discovery rate, defined as $\text{FDR} =T^{-1} \sum_{t=1}^{T}\Vert\widehat{S}(t) \setminus S^{*} \Vert_{0}/\| \widehat{S}(t) \|_{0}$;
    \item[AMS]: the average of the selected model size, defined as $\text{AMS} =T^{-1}\sum_{t=1}^{T}  \text{card}(\widehat{S}(t))$; 
    \item[CF]:  the proportion of correctly fitted models, defined as $\text{CF}=T^{-1}\sum_{t=1}^{T}\mathbb{I}(\widehat{S}(t) = S^*)$. 
\end{itemize}
The criterion SC evaluates the sure screening property of the considered methods.
The PSR and FDR depict two different aspects of a selection
result: a high PSR indicates that the most relevant features are identified, 
whereas a low FDR indicates that a few irrelevant features are misselected.
A powerful variable screening method should guarantee that
SC, PSR and CF are close to one, AMS is close to the true model size, and FDR is close to zero.

\noindent\textbf{Example 1} (Linear regression). 
We generate $ y_{ij} $ from
$y_{ij} = \boldsymbol{X}_{ij}^{\top} \be^* + \varepsilon_{ij},$
where $ \varepsilon_{ij} $ are independent random variables following the standard normal distribution.
For the covariate $\boldsymbol{X}_{ij}$, we consider the following two settings:

Case 1.1: We first generate $ \boldsymbol{Z}_{ij}=(z_{ij,1},\dots,z_{ij,p})^\top $ and $ \boldsymbol{W}_{ij}=(w_{ij,1},\dots,w_{ij,p})^\top$  
independently from a standard multivariate normal distribution. 
Then, we obtain $\boldsymbol{X}_{ij}=(x_{ij,1},\dots,x_{ij,p})^\top$ with $ x_{ij,l} = (z_{ij,l} + w_{ij,l}) / \sqrt{2} $ for every $ 1 \leq l \leq 5 $ 
and $ x_{ij,l} = (z_{ij,l} + \sum_{l^{\prime}=1}^{5}  z_{ij,l^{\prime}} )/2 $  for every $ 5 < l \leq p $. 
Moreover, we set $ \be^* = (2, 4, 6, 8, 10, 0, \ldots, 0)^\top$.
This example is motivated by~\cite{wang2009forward}, the correlation coefficient of $ x_{ij,1} $ and $ y_{ij} $ is substantially smaller than that of $ x_{ij,l} $ and $ y_{ij} $ for every $ l > 5 $. 
Consequently, identifying $ x_{ij,1} $ as a relevant predictor poses an exceptionally challenging task.

Case 1.2: The covariates $ \boldsymbol{X}_{ij}~(1 \leq i \leq m, 1 \leq j \leq n)$ are generated from a multivariate normal distribution $ N(\boldsymbol{0}, \boldsymbol{\Sigma}_{i}) $, 
where $ \boldsymbol{\Sigma}_{i}=( \upsilon_{i}^{\vert s-t \vert}: 1\le s,t\le p) $ and $ \upsilon_{i} $ is generated from a uniform distribution on $(0.2, 0.3) $.
Moreover, we set  $ \be^* = (0.25, -0.5, 1, 0.3, -0.2, 0, \ldots, 0)^\top$.
Under these settings, the covariates $\boldsymbol{X}_{ij}$ are independent but are not identically distributed on different machines.

The results are summarized in Table \ref{tab:linear regression results},
which indicates that the proposed procedure outperforms the marginal variable screening method studied in~\cite{JMLR:v21:19-537}.
Specifically, under Case 1.1, the value of AMS for the marginal variable screening methods, such as the Pearson, Kendall and SIRS, is about 6000,
which indicates that they fail to identify the true model.
In addition, the value of AMS for DC is one, leading to a low SC.
However, the value of AMS for the proposed method is less than 15 and the value of SC is close to one,
which indicates that the proposed method successfully identifies all of the relevant covariates with a moderate model size.
Under Case 1.2, our method remains effective with $m=10$. For instance, 
the value of PSR approaches 0.97, which is close to one, indicating a high degree of accuracy. 
Similarly, the FDR value approaches 0.05, which is close to zero, indicating a low rate of false discoveries.
Finally, as the number of machines $m$ increases from 10 to 50, the performance of all methods tends to deteriorate to some extent. 
Nevertheless, our method consistently outperforms the marginal screening procedures.

\begin{table}[htbp]
	\centering
	\caption{Simulation results of two different settings under linear regression model.}
	\label{tab:linear regression results}

     \resizebox{0.75\textwidth}{!}{
	 \begin{tabular}{c|ccccc|ccccc}
	  \toprule
	  \multicolumn{1}{c}{\multirow{2}{*}{Method}} & \multicolumn{5}{c}{Case 1.1}   & \multicolumn{5}{c}{Case 1.2} \\

	  \cmidrule(lr){2-6} \cmidrule(lr){7-11}   
       \multicolumn{1}{c}{}  & SC & CF & AMS & PSR & \multicolumn{1}{c}{ FDR }        & SC & CF & AMS & PSR & FDR   \\

    \hline
    & \multicolumn{5}{c|}{$m = 10 $ } & \multicolumn{5}{c}{$  m = 10 $ } \\
	Pearson & 1.00  & 0.00  & 6000.00  & 1.00  & 1.00      & 0.81  & 0.00  & 9.24  & 0.78  & 0.44  \\
        Kendall & 1.00  & 0.00  & 6000.00  & 1.00  & 1.00      & 0.79  & 0.00  & 10.13  & 0.76  & 0.47  \\
        SIRS  & 1.00  & 0.00  & 6000.00  & 1.00  & 1.00        & 0.71  & 0.00  & 8.97  & 0.61  & 0.51  \\
        DC    & 0.00  & 0.00  & 1.00  & 0.20  & 0.00          & 0.73  & 0.00  & 9.26  & 0.53  & 0.55  \\
        DVS  & 1.00  & 0.35  & 5.93  & 1.00  & 0.14       & 0.98  & 0.80  & 5.90  & 0.97  & 0.05  \\

    \hline
	& \multicolumn{5}{c|}{$ m = 30 $ } & \multicolumn{5}{c}{$ m = 30 $ } \\
	Pearson & 1.00  & 0.00  & 6000.00  & 1.00  & 1.00       & 0.72  & 0.00  & 10.06 & 0.71  & 0.52  \\
        Kendall & 1.00  & 0.00  & 6000.00  & 1.00  & 1.00       & 0.69  & 0.00  & 11.22 & 0.62  & 0.53  \\
        SIRS  & 1.00  & 0.00  & 6000.00  & 1.00  & 1.00          & 0.63  & 0.00  & 17.36  & 0.51  & 0.62  \\
        DC    & 0.00  & 0.00  & 1.00     & 0.20  & 0.00        & 0.61  & 0.00  & 18.61  & 0.46  & 0.69  \\
        DVS  & 1.00  & 0.12  & 10.47  & 1.00  & 0.44      & 0.86  & 0.69  & 7.25  & 0.85  & 0.31  \\

    \hline
	& \multicolumn{5}{c|}{$ m = 50 $ } & \multicolumn{5}{c}{$  m = 50 $ } \\
	Pearson & 1.00  & 0.00  & 6000.00  & 1.00  & 1.00       & 0.69  & 0.00  & 12.57 & 0.68  & 0.63  \\
        Kendall & 1.00  & 0.00  & 6000.00  & 1.00  & 1.00       & 0.67  & 0.00  & 13.36 & 0.57  & 0.67  \\
        SIRS  & 0.80  & 0.00  & 5999.80  & 0.96  & 1.00         & 0.59  & 0.00  & 20.61 & 0.43  & 0.78  \\
        DC    & 0.00  & 0.00  & 1.00     & 0.20  & 0.00         & 0.54  & 0.00  & 23.25 & 0.41  & 0.80  \\
        DVS  & 1.00  & 0.07  & 13.17  & 1.00  & 0.52      & 0.79  & 0.52  & 12.11  & 0.77  & 0.43  \\

	\bottomrule
	\end{tabular}  }	
\end{table}

\noindent\textbf{Example 2}  (Logistic regression).  
We generate a binary response $ y_{ij} $ from a Bernoulli distribution with successful probability $\text{logit}(\boldsymbol{X}_{ij}^{\top} \be^*)$,
where $\text{logit}(x)=\exp(x)/[1+\exp(x)]$ and $\be^* = (0, 1.5, 0, 2, 0,-0.6, 0, \ldots, 0)^\top$.
For the covariate $\boldsymbol{X}_{ij}$, we consider the following two settings:

Case 2.1:  The covariates $ \boldsymbol{X}_{ij}~(1 \leq i \leq m, 1 \leq j \leq n)$ are independently generated from the standard normal distribution $N(\boldsymbol{0}, \mathbf{I})$,
where $\mathbf{I}\in \mathbb{R}^p $ denotes the identity matrix. 
Under Case 2.1, the covariates $\boldsymbol{X}_{ij}$ are independently and identically distributed random vectors on all machines.

Case 2.2: The covariates $ \boldsymbol{X}_{ij}~(1 \leq i \leq m, 1 \leq j \leq n)$ are generated as in Case 1.2.

\begin{table}[htbp]
	\centering
	\caption{Simulation results of two different settings under logistic regression model.}
	\label{tab:logistic regression results}

	\resizebox{0.75\textwidth}{!}{
	\begin{tabular}{c|ccccc|ccccc}
	\toprule
	\multicolumn{1}{c}{\multirow{2}{*}{Method}} & \multicolumn{5}{c}{Case 2.1}   & \multicolumn{5}{c}{Case 2.2} \\

	\cmidrule(lr){2-6} \cmidrule(lr){7-11}   
    \multicolumn{1}{c}{}  & SC & CF & AMS & PSR & \multicolumn{1}{c}{ FDR }        & SC & CF & AMS & PSR & FDR   \\  

    \hline
	& \multicolumn{5}{c|}{$ m = 10 $ } & \multicolumn{5}{c}{$m = 10 $ } \\
	Pearson & 1.00  & 0.08  & 14.62  & 1.00  & 0.62     & 0.83  & 0.00  & 11.28  & 0.86  & 0.67  \\
    Kendall & 1.00  & 0.06  & 13.58  & 1.00  & 0.81      & 0.85  & 0.00  & 7.23  & 0.31  & 0.68  \\
    SIRS  & 0.98  & 0.05  & 17.95  & 0.97  & 0.67        & 0.72  & 0.00  & 9.14  & 0.82  & 0.64  \\
    DC    & 0.92  & 0.06  & 14.06  & 0.94  & 0.65        & 0.67  & 0.01  & 9.99  & 0.83  & 0.66  \\
    DVS & 1.00  & 0.97  & 3.27  & 1.00  & 0.06       & 0.97  & 0.73  & 3.57  & 0.96  & 0.06  \\   

	\hline
	& \multicolumn{5}{c|}{$ m = 30 $ } & \multicolumn{5}{c}{$ m = 30 $ } \\
	Pearson & 0.94  & 0.08  & 16.32  & 0.92  & 0.67       & 0.76  & 0.00  & 13.22  & 0.81  & 0.69  \\
        Kendall & 0.91  & 0.09  & 17.26  & 0.92  & 0.78       & 0.75  & 0.00  & 19.53  & 0.30  & 0.75  \\
        SIRS  & 0.94  & 0.04  & 19.31  & 0.82  & 0.71         & 0.62  & 0.00  & 21.57  & 0.64  & 0.77  \\
        DC    & 0.86  & 0.11  & 22.58  & 0.75  & 0.73         & 0.59  & 0.00  & 27.92  & 0.55  & 0.81  \\
        DVS  & 0.95  & 0.91  & 4.68  & 0.93  & 0.21       & 0.89  & 0.62  & 6.14  & 0.86  & 0.32  \\  

	\hline
	& \multicolumn{5}{c|}{$ m = 50 $ } & \multicolumn{5}{c}{$ m = 50 $ } \\
	Pearson & 0.89  & 0.09  & 16.32  & 0.86  & 0.65         & 0.71  & 0.00  & 15.26  & 0.63  & 0.74  \\
        Kendall & 0.86  & 0.07  & 19.62  & 0.83  & 0.78         & 0.62  & 0.00  & 17.28  & 0.29  & 0.78  \\
        SIRS  & 0.67  & 0.03  & 25.44  & 0.67  & 0.76           & 0.52  & 0.00  & 31.36  & 0.62  & 0.81  \\
        DC    & 0.58  & 0.08  & 28.91  & 0.52  & 0.82           & 0.51  & 0.00  & 36.17  & 0.49  & 0.86  \\
        DVS  & 0.90  & 0.81  & 7.50  & 0.87  & 0.55         & 0.72  & 0.54  & 10.26  & 0.72  & 0.58  \\

	\bottomrule
	\end{tabular}  }
\end{table}

The results are summarized in Table \ref{tab:logistic regression results}.
We observe that under Case 2.1 with $m=10$ and $30$,
all methods show that the values of SC and PSR approach 1.
This indicates that the true model $ S^* $ is retained within the selected model with a high probability for all methods. 
However, the proposed method outperforms the other four methods in terms of FDR, resulting in a smaller value for AMS.
This suggests that the proposed method successfully identifies all the relevant covariates 
while having a smaller model size compared to the marginal variable screening procedures.
When $m=50$, the DC method shows a relatively poor performance with a SC of 0.58.
In addition, the proposed method still outperforms the marginal screening method under Case 2.2,
and the findings are similar to that of Case 1.2 in Example 1.

\noindent\textbf{Example 3}  (Poisson regression). 
The response $ y_{ij} $ is generated from a Poisson distribution with parameter $\lambda_{ij}=\exp(\boldsymbol{X}_{ij}^{\top} \be^*)$, 
where $ \be^* =(0, 0.8, -0.6, 0, 0.5, 0, \ldots, 0)^\top$.
The covariate $\boldsymbol{X}_{ij}$ are generated as in Example 2.

The results are summarized in Table \ref{tab:poisson regression results}.
It indicates that the proposed method still outperforms the marginal screening method,
and the findings are similar to that of Example 2.

\begin{table}[htbp]
	\centering
	\caption{Simulation results of two different settings under Poisson regression model.}
	\label{tab:poisson regression results}
	\resizebox{0.75\textwidth}{!}{
	\begin{tabular}{c|ccccc|ccccc}
		\toprule
	  \multicolumn{1}{c}{\multirow{2}{*}{Method}} & \multicolumn{5}{c}{Case 2.1}   & \multicolumn{5}{c}{Case 2.2} \\
	  \cmidrule(lr){2-6} \cmidrule(lr){7-11}   
       \multicolumn{1}{c}{}  & SC & CF & AMS & PSR & \multicolumn{1}{c}{ FDR }        & SC & CF & AMS & PSR & FDR   \\   
	  \hline
	    & \multicolumn{5}{c|}{$ m = 10 $ } & \multicolumn{5}{c}{$ m = 10 $ } \\
	   Pearson & 1.00  & 0.09  & 13.68  & 1.00  & 0.60         & 0.87  & 0.02  & 11.40  & 0.77  & 0.62  \\
         Kendall & 1.00  & 0.07  & 12.05  & 1.00  & 0.74          & 0.82  & 0.00  & 6.26  & 0.73  & 0.73  \\
         SIRS  & 0.96  & 0.09  & 17.61  & 0.95  & 0.66            & 0.79  & 0.00  & 10.09  & 0.72  & 0.65  \\
         DC    & 0.92  & 0.06  & 17.15  & 0.91  & 0.72            & 0.78  & 0.00  & 8.43  & 0.75  & 0.74  \\
        DVS & 1.00  & 0.98  & 4.12  & 1.00  & 0.05            & 0.97  & 0.82  & 5.07  & 0.97  & 0.26  \\	
	  \hline
	    & \multicolumn{5}{c|}{$m = 30 $ } & \multicolumn{5}{c}{$m = 30 $ } \\
	   Pearson & 0.93  & 0.07  & 14.13  & 0.92  & 0.65         & 0.79  & 0.01  & 12.06  & 0.71  & 0.70  \\
          Kendall & 0.91  & 0.08  & 13.61  & 0.91  & 0.76         & 0.73  & 0.00  & 10.25  & 0.69  & 0.75  \\
         SIRS  & 0.91  & 0.07  & 15.27  & 0.83  & 0.69            & 0.71  & 0.00  & 17.69  & 0.65  & 0.73  \\
         DC    & 0.83  & 0.05  & 17.31  & 0.71  & 0.75            & 0.62  & 0.00  & 25.11  & 0.62  & 0.78  \\
         DVS & 0.94  & 0.94  & 4.29  & 0.96  & 0.21           & 0.85  & 0.68  & 6.59  & 0.90  & 0.35  \\
	  \hline
	    & \multicolumn{5}{c|}{$ m = 50 $ } & \multicolumn{5}{c}{$ m = 50 $ } \\
	   Pearson & 0.87  & 0.14  & 12.05  & 0.86  & 0.58         & 0.71  & 0.00  & 14.53  & 0.63  & 0.73  \\
          Kendall & 0.85  & 0.06  & 20.37  & 0.79  & 0.82         & 0.66  & 0.00  & 15.27  & 0.59  & 0.79  \\
          SIRS  & 0.68  & 0.03  & 23.55  & 0.57  & 0.73           & 0.62  & 0.00  & 27.21  & 0.52  & 0.82  \\
          DC    & 0.61  & 0.03  & 26.71  & 0.49  & 0.79           & 0.53  & 0.00  & 29.33  & 0.41  & 0.87  \\
          DVS & 0.89  & 0.87  & 5.21  & 0.86  & 0.55          & 0.77  & 0.56  & 8.16  & 0.71  & 0.49  \\

	  \bottomrule
	  \end{tabular}	  }
  \end{table}

\subsection{Real Data Analysis}
In this section, we apply the proposed method to analyze the BASEHOCK dataset. 
It is a text dataset specifically designed for binary classification tasks,
and consists of 1993 samples with 4862 covariates.
Therefore, we consider a logistic regression model for this data analysis.
The data can be downloaded from \url{https://jundongl.github.io/scikit-feature/datasets.html}. 

In order to evaluate the performance of our method, 
we randomly partitioned the dataset into $m$ subgroups. 
The value of $m$ represents the number of machines used in the process, 
and we specifically set $m$ to be equal to 1, 10, 20, and 30. 
Note that the case where $m=1$ serves as the benchmark against which we compare the performance of other cases.
To reduce the potential impacts of sample partitioning on performance, we perform 100 replications of our method.
Specifically, at the $j$th replication, we divide the dataset into $m$ subgroups 
and then present the proposed method. The selected model in this replication is denoted as $\widehat{S}(t)$.

\begin{table}[htbp]
	\centering
	\caption{Real data analysis: The average size of the selected models from 100 replications.}
	\label{tab:real data results}
\resizebox{0.5\textwidth}{!}{
	\begin{tabular}{ccccc}
		\toprule
	& $ m=1 $   & $ m=10 $  & $ m=20 $  & $ m=30 $  \\	    
		\midrule
		Pearson & 28.31 & 31.82 & 30.29 & 34.31 \\
		Kendall & 46.55 & 47.93 & 45.02 & 50.46 \\
		SIRS  & 62.17 & 67.25 & 65.32 & 71.29 \\
		DC    & 122.05 & 131.71 & 129.61 & 152.06 \\
		DVS & 13.85 & 12.97 & 11.76 & 10.64 \\
    \bottomrule				
\end{tabular}  
}
\end{table}

Table~\ref{tab:real data results} presents the average size of selected models from the 100 replications,
which is calculated as $ 100^{-1} \sum_{t=1}^{100}  \text{card}(\widehat{S}(t))$.
Table~\ref{tab:BASEHOCK results} reports the selected covariates for the proposed methods.
The results demonstrate that the proposed method selects a smaller number of covariates compared to the marginal screening method. 
Furthermore, the proposed method exhibits consistent stability when $m$ ranges from 10 to 30.

\begin{table}[htbp]
	\centering
	\caption{Real data analysis: The selected covariates for the proposed method.}
	\label{tab:BASEHOCK results}
    \resizebox{0.5\textwidth}{!}{
	\begin{tabular}{ccccc}
		\toprule		    
        Covariates &   $ m=1 $   & $ m=10 $   & $ m=20 $  & $ m=30 $    \\
        \midrule

	   X357  & $ \surd $ & $ \surd $  & $ \surd $  & $ \surd $  \\
		X3303 & $ \surd $ & $ \surd $ & $ \surd $  & $ \surd $  \\
		X4681 & $ \surd $ & $ \surd $   &       &  \\
		X3083 & $ \surd $  &       & $ \surd $   & $ \surd $  \\
		X1309 & $ \surd $ & $ \surd $ & $ \surd $  & $ \surd $  \\
		X4357 & $ \surd $  & $ \surd $ & $ \surd $  & $ \surd $  \\
		X1860 & $ \surd $  & $ \surd $ &       &  \\
		X1404 & $ \surd $  &       & $ \surd $  &  \\
		X847  & $ \surd $  &       &       & $ \surd $  \\
		X769  & $ \surd $  & $ \surd $  & $ \surd $ &  \\
		X1273 & $ \surd $  &       & $ \surd $  & $ \surd $  \\
		X1029 & $ \surd $  & $ \surd $  & $ \surd $  &  \\
		X3282 & $ \surd $  & $ \surd $  &       &  \\
		X1794 & $ \surd $  & $ \surd $  &       &  \\
		X2826 & $ \surd $  &       &       &  \\
		X2966 &       & $ \surd $  & $ \surd $  & $ \surd $  \\
		X2006 &       & $ \surd $  & $ \surd $  & $ \surd $  \\
		X1800 &       &       &       & $ \surd $   	\\	
		\bottomrule
	\end{tabular}
}
\end{table}

\section{Conclusion}\label{CR}
In this article, we developed a distributed variable screening method for the generalized linear regression models.
This method operated using a sparsity-restricted surrogate likelihood estimator. 
We also established its sure screening property. 
Additionally, we evaluated the performance of the proposed method through simulation studies and real data analysis.

It is worth noting that our method is specifically designed for linear models. 
However, it has the potential to be extended to nonlinear models or semiparametric models 
such as single index models, partial linear models, and Cox models. 
This extension would be interesting and will be a subject of our future research.

\section*{Acknowledgments}
Qu's research was partially supported by the National Natural Science Foundation of China (No.12001219). 
Li's research was partially supported by the National Natural Science Foundation of China (No.62377019). 
Sun's research was partially supported by the National Natural Science
Foundation of China (No.12171463). 

\bibliographystyle{asa}
\bibliography{references}

\newpage




\appendix
\section*{Appendix}
\renewcommand\theequation{A.\arabic{equation}}
\begin{proof}[Proof of Theorem~\ref{thm: IHT}]	
By the definition of $\ell(\be)$, an application of the Taylor's expansion yields
\begin{align} \label{Theorem1:eq0} 
    \ell(\boldsymbol{\gamma}) 
    &= \ell(\be) + \langle \boldsymbol{\gamma} - \be, \nabla \ell(\be) \rangle + \frac{1}{2}(\boldsymbol{\gamma} - \be)^{\top} \nabla^2 \ell (\overline{\be}) (\boldsymbol{\gamma} - \be)   \nonumber  \\   
    &= \ell(\be) + \langle \boldsymbol{\gamma} - \be, \nabla \ell(\be) \rangle +\frac{1}{2}(\boldsymbol{\gamma} - \be)^{\top} \left\lbrace \frac{1}{n} \sum_{j=1}^{n} \left(  \boldsymbol{X}_{1j} b^{\prime\prime} ( \boldsymbol{X}_{1j}^{\top} \overline{\be})  \boldsymbol{X}_{1j}^{\top} \right)   \right\rbrace  (\boldsymbol{\gamma} - \be)   \nonumber  \\
    &= \ell(\be) + \langle \boldsymbol{\gamma} - \be, \nabla \ell(\be) \rangle +\frac{1}{2}(\boldsymbol{\gamma} - \be)^{\top} \left\lbrace \frac{1}{n} \mathbb{X}_1^{\top} \text{diag}\{b^{\prime\prime} ( \boldsymbol{X}_{11}^{\top} \overline{\be}), \ldots, b^{\prime\prime} ( \boldsymbol{X}_{1n}^{\top} \overline{\be})\}   \mathbb{X}_1 \right\rbrace  (\boldsymbol{\gamma} - \be)   \nonumber    \\
    &\leq \ell(\be) + \langle \boldsymbol{\gamma} - \be, \nabla \ell(\be) \rangle +\frac{1}{2n}\mu  (\boldsymbol{\gamma} - \be)^{\top}   \mathbb{X}_1^{\top} \mathbb{X}_1   (\boldsymbol{\gamma} - \be)   \nonumber  \\
    &\leq \ell(\be) + \langle \boldsymbol{\gamma} - \be, \nabla \ell(\be) \rangle + \frac{1}{2}\vartheta \Vert \boldsymbol{\gamma} - \be \Vert_2^2 =  h(\boldsymbol{\gamma}; \be),   
  \end{align}
 where $ \overline{\be} $ lies between $ \boldsymbol{\gamma} $ and $ \be $.
 The last two inequalities hold due to the facts that $\mu=\max_{\theta\in\bar\Theta} b^{\prime \prime}(\theta)$ and $\vartheta>\rho_1\mu/n$.
 This implies that 
  \begin{align}  \label{Theorem1:eq1}
      \ell ( \widehat{ \be }^{(t+1)} )  \leq   h ( \widehat{ \be }^{(t+1)}; \widehat{ \be }^{(t)} ). 
  \end{align}
  In addition, by the definition of $ \widehat{ \be }^{(t+1)}$, we have 
  \begin{align} \label{Theorem1:eq2}
  h ( \widehat{ \be }^{(t+1)}; \widehat{ \be }^{(t)} )  \leq h ( \widehat{ \be }^{(t)}; \widehat{ \be }^{(t)} )  = \ell ( \widehat{ \be }^{(t)} ).
  \end{align}
Then by~\eqref{Theorem1:eq1} and~\eqref{Theorem1:eq2}, we obtain $ \ell ( \widehat{ \be }^{(t+1)} ) \leq \ell ( \widehat{ \be }^{(t)} ) $.
  This completes the proof.
\end{proof}

\begin{proof}[Proof of Theorem~\ref{thm: convergence}]
  By~\eqref{Theorem1:eq2} and some arguments similar to the proof of~\eqref{Theorem1:eq0}, we have
  \begin{align}  \label{pf2:eq1}
  \ell (\widehat{ \be }^{(t)}) 
  &\geq  h\big( \widehat{ \be }^{(t+1)}; \widehat{ \be }^{(t)} \big) \nonumber \\
  &= \ell( \widehat{ \be }^{(t)} ) + \langle \widehat{ \be }^{(t+1)} - \widehat{ \be }^{(t)}, \nabla \ell( \widehat{ \be }^{(t)} ) \rangle +\frac{\vartheta}{2} \Vert \widehat{ \be }^{(t+1)} - \widehat{ \be }^{(t)} \Vert_2^2  \nonumber\\
  &= \ell( \widehat{ \be }^{(t+1)} ) +\frac{\vartheta}{2} \Vert \widehat{ \be }^{(t+1)} - \widehat{ \be }^{(t)} \Vert_2^2 - \big[ \ell( \widehat{ \be }^{(t+1)} )- \ell(\widehat{ \be }^{(t)} ) -\langle \widehat{ \be }^{(t+1)} - \widehat{ \be }^{(t)}, \nabla \ell( \widehat{ \be }^{(t)} ) \rangle  \big] \nonumber \\
  &\ge \ell( \widehat{ \be }^{(t+1)} )  + \frac{1}{2} (\vartheta - \rho_1 \rho^{(t)}/n ) \Vert \widehat{ \be }^{(t+1)} - \widehat{ \be }^{(t)} \Vert_2^2,
  \end{align}
  where $ \rho^{(t)}=\max_{1\le i\le n} b^{\prime\prime} ( \boldsymbol{X}_{1i}^{\top} \overline{\be})$ 
  and $\overline{\be}$ is between $\widehat\be^{(t)}$ and $\widehat\be^{(t+1)}$. 
  
Note that $ \ell(\be) $ is bounded. Therefore, $ \ell (\widehat{ \be }^{(t)} ) $ has a limiting point within the feasible region $(\|\be\|_0\le k)$.
  Based on this fact, we show that there exists a subsequence of $\widehat{ \be }^{(t)} $ that is convergent.
  Since $\rho^{(t)} \le \mu $ for all $t\ge 1$, there exists a subsequence $ \mathcal{S} $ of $\{t\}_{t=1}^{\infty}$ 
  such that $ \rho^{[l]} \rightarrow  \widetilde{\rho}~(l\in\mathcal{S})$. 
In addition, $ \ell (\widehat{\be}^{(l)}) $ is convergent, due to the fact that $ \ell (\widehat{\be}^{(l)}) $ is bounded and decreasing.
  These facts, together with \eqref{pf2:eq1}, 
  imply that $ \Vert \widehat{\be}^{(l+1)} - \widehat{\be}^{(l)} \Vert_2 \rightarrow 0 $ as $ l \in \mathcal{S} $ goes to infinity.
  Consequently, $\widehat{M}^{(l)}=\{1\le j\le p: \widehat\beta_j^{(l)}\neq 0\}$ is also convergent, where $l\in\mathcal{S}$.
  Since the support $\widehat{M}^{(l)}$ is a discrete sequence, 
  there exists a constant $ l^* \in \mathcal{S} $ such that 
  $\widehat{M}^{(l)}= \widehat{M}^{(l^*)}$ for all $ l \in \mathcal{S} $ satisfying $ l \ge l^* $. 
  Thus, Algorithm~\ref{DIHT algorithm} becomes a gradient descent algorithm on  $\widehat{M}^{(l)}$ for all $ l \in \mathcal{S} $ and $ l \ge l^* $. 
  Because a gradient descent algorithm for minimizing a convex function over a closed convex set yields a sequence of iterations that converges~\cite{nesterov2003introductory}, we conclude that the subsequence $ \{ \widehat{ \be }^{(t)}: t \in \mathcal{S} \}$ is convergent.
  
  Next, we prove the second part of Theorem \ref{thm: convergence}, that is, Algorithm~\ref{DIHT algorithm} stops in a finite number of steps.
  By~\eqref{pf2:eq1}, we obtain
  \begin{align*}
  \sum_{t=0}^K \ell( \widehat{ \be }^{(t)} ) - \ell( \widehat{ \be }^{(t+1)} )  &\geq \frac{1}{2}\big(\vartheta - \rho_1 \mu/n\big)  \sum_{t=0}^K \Vert \widehat{ \be }^{(t+1)} - \widehat{ \be }^{(t)} \Vert_2^2.
  \end{align*}
  This implies 
  \begin{align*}
    \min_{0 \le t \le K} \Vert \widehat{ \be }^{(t+1)} - \widehat{ \be }^{(t)} \Vert_2^2  
    \le \frac{2}{K (\vartheta - \rho_1 \mu/n)} \big( \ell(\widehat\be^{(0)}) - \ell(\widehat\be^{(K+1)}) \big).
  \end{align*}
  Let $ \ell (\be^{\star} ) $ be the limiting point of $ \ell( \widehat{ \be }^{(t)} ) $. Then we have
  \begin{align*}
  \min_{0 \le t \le K} \Vert \widehat{ \be }^{(t+1)} - \widehat{ \be }^{(t)} \Vert_2^2
    \le \frac{2}{K (\vartheta  - \rho_1 \mu/n)} \big( \ell(\widehat\be^{(0)}) - \ell (\be^{\star} )\big).
  \end{align*}
  Note that $ \ell(\be) $ is bounded.  Thus, if set $ K \gtrsim  1/(\varepsilon^2 (\vartheta  - \rho_1 \mu/n)) $, 
  then  for any $ \varepsilon >0 $, there exists some $ 0 \le \tilde t \le K $ 
  such that $ \Vert  \widehat{\be}^{(\tilde t +1)} - \widehat{\be}^{(\tilde t)} \Vert_2^2 \leq \varepsilon^2 $.
  This completes the proof.
  \end{proof}
  
  To prove Theorem~\ref{thm: screening property}, we need the following two lemmas.
  
  \begin{lemma} \label{lemma 1}
  Let $ \{Y_{i}, i=1,\ldots, N \} $ be independent random variables from exponential family distributions with natural parameters $ \theta_{i} \in \bar{\Theta} $. 
  Denote the mean and variance of $ Y_{i} $ as $ \mu_{i} $ and $ \sigma_{i}^{2} $. 
  If there exist some constants $ t_{N i}~(i=1, \ldots, N )$ such that 
  \[  \sum_{i=1}^{N} t_{N i}^{2} \sigma_{i}^{2}=1~~\text{and}~~\max _{1 \leq i \leq N}\left\{t_{N i}^{2} \right\} = O\left(N^{-1}\right),  \]  
  then  for some positive sequence $ d_{N}=o\left(N^{1 / 2}\right) $ and a sufficiently large $N$, we have
  \[  \Pr \left(\sum_{i=1}^{N} t_{N i} \left(Y_{i}-\mu_{i}\right)>d_{N}\right) \leq \exp \left(-d_{N}^{2} / 3\right).  \]  
  \end{lemma}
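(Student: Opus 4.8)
The plan is to run the Chernoff (exponential Markov) argument, exploiting the explicit moment generating function available in the exponential family. First I would fix $s > 0$ and write, by Markov's inequality and independence,
\[
\Pr\left(\sum_{i=1}^N t_{Ni}(Y_i - \mu_i) > d_N\right) \le e^{-s d_N} \prod_{i=1}^N \mathbb{E}\left[e^{s t_{Ni}(Y_i - \mu_i)}\right].
\]
The essential input is the cumulant form of the moment generating function: since $Y_i$ has density $\exp\{\theta_i y - b(\theta_i) + c(y)\}$, integrating against $e^{uy}$ shows $\mathbb{E}[e^{u Y_i}] = \exp\{b(\theta_i + u) - b(\theta_i)\}$ whenever $\theta_i + u$ remains in the parameter domain, and therefore
\[
\mathbb{E}\left[e^{s t_{Ni}(Y_i - \mu_i)}\right] = \exp\left\{b(\theta_i + s t_{Ni}) - b(\theta_i) - s t_{Ni}\, b'(\theta_i)\right\},
\]
recalling $\mu_i = b'(\theta_i)$.

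Next I would Taylor-expand the exponent to second order, obtaining $b(\theta_i + s t_{Ni}) - b(\theta_i) - s t_{Ni}\, b'(\theta_i) = \tfrac12 (s t_{Ni})^2 b''(\xi_i)$ for some $\xi_i$ lying between $\theta_i$ and $\theta_i + s t_{Ni}$. The decisive observation concerns the magnitude of the perturbation: choosing $s = d_N$ and combining $\max_i t_{Ni}^2 = O(N^{-1})$ with $d_N = o(N^{1/2})$ forces $\max_i |s t_{Ni}| = o(1)$. Hence $\theta_i + s t_{Ni}$ stays inside the domain for all large $N$, and uniform continuity of $b''$ on (a neighborhood of) the compact set $\bar\Theta$ gives $b''(\xi_i) = \sigma_i^2 + o(1)$ uniformly in $i$, where $\sigma_i^2 = b''(\theta_i)$. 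Since $b''$ is bounded below on $\bar\Theta$, the normalization $\sum_i t_{Ni}^2 \sigma_i^2 = 1$ also keeps $\sum_i t_{Ni}^2$ bounded, so summing the logarithms yields
\[
\sum_{i=1}^N \log \mathbb{E}\left[e^{s t_{Ni}(Y_i - \mu_i)}\right] = \frac{s^2}{2}\sum_{i=1}^N t_{Ni}^2 b''(\xi_i) = \frac{s^2}{2}\,(1 + o(1)).
\]

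Finally, with $s = d_N$ the exponent in the Chernoff bound collapses to $-d_N^2 + \tfrac12 d_N^2 (1 + o(1)) = -d_N^2\big(\tfrac12 - o(1)\big)$, and since $\tfrac12 - o(1) \ge \tfrac13$ for all sufficiently large $N$, the desired estimate $\exp(-d_N^2/3)$ follows. I expect the principal difficulty to be the uniform control of the Taylor remainder, namely ensuring that the intermediate point $\xi_i$ never leaves the region where $b''$ is bounded and continuous; this is exactly what the joint scaling $d_N = o(N^{1/2})$ and $\max_i t_{Ni}^2 = O(N^{-1})$ secures, by making the perturbation $s t_{Ni}$ vanish uniformly. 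The gap between the natural second-order rate $1/2$ and the stated rate $1/3$ is precisely the slack that absorbs these $o(1)$ approximation errors.
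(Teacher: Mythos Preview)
Your Chernoff argument is correct and is precisely the standard route; the paper itself does not give an independent proof but simply refers to \cite{chen2012extended}, whose argument proceeds exactly as you outline (exponential-family MGF, second-order Taylor expansion of $b$, and the choice $s=d_N$ so that $\max_i |s t_{Ni}|=o(1)$). The only point worth flagging is that your step ``$\sum_i t_{Ni}^2$ is bounded'' tacitly uses that $b''$ is strictly positive on the compact set $\bar\Theta$, which is implicit in the non-degeneracy of the exponential family but not stated verbatim in the hypotheses; this is harmless in context.
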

  \begin{proof}
  The proof of Lemma~\ref{lemma 1} can be found in~\cite{chen2012extended}.
\end{proof}

\begin{lemma} \label{lemma 2}
    Let $ \widetilde{ \be }$ be the Lasso estimator defined in \eqref{initial:Lasso}.
    If Conditions (C4) and (C5) hold with $ \lambda \rightarrow 0 $ as $ n \rightarrow \infty $,
    then with probability at least $ 1 - 2p \exp(-C_1 n \lambda^2) $, we have
    \[   \Vert  \widetilde{ \be } - \be^{*} \Vert_{2} \leq C_2 \lambda  q,    \]
    and
    \[  \Vert  \widetilde{ \be } - \be^{*} \Vert_{1} \leq C_2 \lambda q,  \]
    where $ C_1 $ and $ C_2 $ are some positive constants independent of $n$.
\end{lemma}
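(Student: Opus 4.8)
The plan is to run the classical Lasso oracle-inequality argument adapted to the generalized linear loss $\mathcal{L}_1$, the two essential ingredients being a high-probability bound on the score $\nabla\mathcal{L}_1(\be^*)$ and the restricted eigenvalue condition (C5). Write $\boldsymbol{\delta}=\widetilde{\be}-\be^*$. First I would record the \emph{basic inequality}: optimality of the Lasso gives $\mathcal{L}_1(\widetilde{\be})+\lambda\Vert\widetilde{\be}\Vert_1\le\mathcal{L}_1(\be^*)+\lambda\Vert\be^*\Vert_1$, and a second-order Taylor expansion of $\mathcal{L}_1$ about $\be^*$ yields $\mathcal{L}_1(\widetilde{\be})-\mathcal{L}_1(\be^*)=\langle\nabla\mathcal{L}_1(\be^*),\boldsymbol{\delta}\rangle+\frac{1}{2}\boldsymbol{\delta}^\top\mathcal{H}(\bar{\be})\boldsymbol{\delta}$ for some $\bar{\be}$ lying between $\widetilde{\be}$ and $\be^*$. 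Combining the two and bounding the score term by H\"older's inequality produces $\frac{1}{2}\boldsymbol{\delta}^\top\mathcal{H}(\bar{\be})\boldsymbol{\delta}+\lambda\Vert\widetilde{\be}\Vert_1\le\Vert\nabla\mathcal{L}_1(\be^*)\Vert_\infty\Vert\boldsymbol{\delta}\Vert_1+\lambda\Vert\be^*\Vert_1$.

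The crux of the argument is the concentration step. I would show that on an event $\mathcal{A}$ with $\Pr(\mathcal{A})\ge 1-2p\exp(-C_1 n\lambda^2)$ one has $\Vert\nabla\mathcal{L}_1(\be^*)\Vert_\infty\le\lambda/2$. Since $\nabla\mathcal{L}_1(\be^*)=n^{-1}\sum_{j=1}^n\boldsymbol{X}_{1j}\bigl(b^{\prime}(\boldsymbol{X}_{1j}^\top\be^*)-y_{1j}\bigr)$ and $\mathbb{E}(y_{1j}\mid\boldsymbol{X}_{1j})=b^{\prime}(\boldsymbol{X}_{1j}^\top\be^*)$, each coordinate is a centered sum of $n$ independent exponential-family deviations. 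Condition (C4) supplies the boundedness $\vert x_{1j,l}\vert\le c_2$ together with the variance normalization needed to verify the hypotheses of Lemma~\ref{lemma 1}; applying that lemma to each coordinate (and to both tails, with a threshold proportional to $\lambda$) and taking a union bound over the $p$ coordinates produces the stated probability, the exponent $n\lambda^2$ arising because the normalized deviation threshold scales like $\sqrt{n}\,\lambda$.

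On $\mathcal{A}$ I would then derive the cone condition. Writing $\Vert\widetilde{\be}\Vert_1\ge\Vert\be^*\Vert_1-\Vert\boldsymbol{\delta}_{S^*}\Vert_1+\Vert\boldsymbol{\delta}_{S^{*}_c}\Vert_1$ (using $\be^*_{S^{*}_c}=0$) and substituting $\Vert\nabla\mathcal{L}_1(\be^*)\Vert_\infty\le\lambda/2$ into the basic inequality, the terms $\lambda\Vert\be^*\Vert_1$ cancel and, because $\boldsymbol{\delta}^\top\mathcal{H}(\bar{\be})\boldsymbol{\delta}\ge 0$, one obtains $\Vert\boldsymbol{\delta}_{S^{*}_c}\Vert_1\le 3\Vert\boldsymbol{\delta}_{S^*}\Vert_1$. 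This places $\boldsymbol{\delta}$ in the cone where (C5) is available, so $\boldsymbol{\delta}^\top\mathcal{H}(\bar{\be})\boldsymbol{\delta}\ge c_4\Vert\boldsymbol{\delta}_{S^*}\Vert_2^2$, where I must also check that $\bar{\be}$ satisfies $\boldsymbol{X}^\top\bar{\be}\in\bar{\Theta}$, which holds since $\bar{\Theta}$ is convex and $\bar{\be}$ is a convex combination of $\widetilde{\be}$ and $\be^*$. Feeding this back gives $\frac{c_4}{2}\Vert\boldsymbol{\delta}_{S^*}\Vert_2^2\le\frac{3\lambda}{2}\Vert\boldsymbol{\delta}_{S^*}\Vert_1\le\frac{3\lambda}{2}\sqrt{q}\,\Vert\boldsymbol{\delta}_{S^*}\Vert_2$, hence $\Vert\boldsymbol{\delta}_{S^*}\Vert_2\lesssim\lambda\sqrt{q}$. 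The $\ell_1$ bound then follows from the cone condition via $\Vert\boldsymbol{\delta}\Vert_1\le 4\Vert\boldsymbol{\delta}_{S^*}\Vert_1\le 4\sqrt{q}\,\Vert\boldsymbol{\delta}_{S^*}\Vert_2\lesssim\lambda q$, and the $\ell_2$ bound from $\Vert\boldsymbol{\delta}\Vert_2\le\Vert\boldsymbol{\delta}\Vert_1\lesssim\lambda q$, giving the common constant $C_2$.

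I expect the main obstacle to be the concentration step: verifying that the coordinatewise score terms meet the variance-normalization hypotheses of Lemma~\ref{lemma 1} under (C4), and tracking constants carefully so that the threshold $\lambda/2$ translates into the exponent $C_1 n\lambda^2$ with the factor $2p$ from the two-sided union bound. Once the score is controlled, the remaining steps---the basic inequality, the cone condition, and the application of (C5)---are routine algebra.
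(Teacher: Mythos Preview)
Your proposal is correct and follows essentially the same route as the paper's proof: the basic Lasso inequality plus Taylor expansion, the concentration event $\{\Vert\nabla\mathcal{L}_1(\be^*)\Vert_\infty\le\lambda/2\}$ controlled via Lemma~\ref{lemma 1} and (C4), the cone condition $\Vert\boldsymbol{\delta}_{S^*_c}\Vert_1\le 3\Vert\boldsymbol{\delta}_{S^*}\Vert_1$, and then (C5) to close the loop. The only cosmetic difference is that the paper first bounds $\Vert\boldsymbol{\delta}_{S^*}\Vert_1$ directly (via $\Vert\boldsymbol{\delta}_{S^*}\Vert_1^2\le q\Vert\boldsymbol{\delta}_{S^*}\Vert_2^2$ and (C5)) rather than $\Vert\boldsymbol{\delta}_{S^*}\Vert_2$, but the arithmetic is equivalent; your extra remark that $\bar{\be}$ lies in the convex hull so that (C5) applies is a detail the paper leaves implicit.
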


\begin{proof}[Proof of Lemma~\ref{lemma 2}]
  The proof follows from Theorem 7.2 in~\cite{bickel2009simultaneous}.
  By the definition of $ \widetilde{ \be } $, we have 	
  \begin{align}\label{lemma3:eq0}
    \mathcal{L}_{1}(\widetilde{ \be }) - \mathcal{L}_{1}(\be^{*})  \leq  \lambda(\Vert \be^* \Vert_{1} -\Vert \widetilde{ \be } \Vert_{1}).   
  \end{align}
  In addition, by the Taylor's expansion, we have
  \[  \mathcal{L}_{1}(\widetilde{ \be }) - \mathcal{L}_{1}(\be^{*}) = \left \langle \nabla\mathcal{L}_{1}(\be^*) ,  \boldsymbol{w} \right \rangle 
    + (1/2) \boldsymbol{w}^\top\nabla^2 \mathcal{L}_{1}(\overline{\be})\boldsymbol{w},  \]  
  where $ \boldsymbol{w} = \widetilde{ \be } - \be^{*} $ and $ \overline{\be} $ is between $ \widetilde{ \be } $ and $ \be^* $. 
  This, together with~\eqref{lemma3:eq0}, implies
  \begin{align} \label{algo:eq1}
  \boldsymbol{w}^\top (\nabla^2 \mathcal{L}_{1}(\overline{\be})) \boldsymbol{w}  
  &= -2\langle \nabla\mathcal{L}_{1}(\be^*),\boldsymbol{w} \rangle 
  + 2( \mathcal{L}_{1}(\widetilde{ \be }) - \mathcal{L}_{1}(\be^{*})) \nonumber \\
  &\leq 2 \Vert \nabla\mathcal{L}_{1}(\be^*) \Vert_{\infty}  \Vert \boldsymbol{w} \Vert_{1} + 2 \lambda( \Vert \be^* \Vert_{1} - \Vert \widetilde{ \be } \Vert_{1}). 
    \end{align}
  Define
    \[  \mathcal{S}_0 =\Big\{ \max_{ 1 \le l \le p} | (\nabla \mathcal{L}_{1}(\be^*))_l | \le \lambda/2\Big\},   \] 
  where $ (\nabla \mathcal{L}_{1} \left( \be^{*} \right))_l$ is the $ l $th component of $ \nabla \mathcal{L}_{1}( \be^{*}) $ and is defined as
    \[
      (\nabla \mathcal{L}_{1} \left( \be^{*} \right))_l = n^{-1}\sum_{j=1}^{n}  \left[  b^{\prime}(\boldsymbol{X}_{1j}^{\top} \be^{*} ) - y_{1j} \right] x_{1j,l}
      = n^{-1}\sum_{j=1}^{n}  \left[  \mathbb{E} (y_{1j}|\boldsymbol{X}_{1j}) - y_{1j} \right] x_{1j,l}.
    \]     
  Let $ t_{nj} = x_{1j,l} (  \sum_{j=1}^{n} x_{1j,l}^{2} \sigma_{1j}^{2})^{-1/2}$. It can be checked that $ \sum_{j=1}^{n} t_{nj}^{2} \sigma_{1j}^{2}=1 $. 
  Since $ b(\cdot) $ is twice continuously differentiable on $ \bar{\Theta} $, we have $ b^{\prime \prime}(\theta) \leq \mu $ for some constant $ \mu>0$. 	
  Thus, by Condition (C4), we have 
  \[  \max _{j}\{t_{nj}^{2}\} = O(n^{-1})~~ 
  \text{and}~~ \sum_{j=1}^{n} x_{1j,l}^{2} \sigma_{1j}^{2} \leq c_{2}^{2} n \mu^{2}.  \] 
  Then, Lemma~\ref{lemma 1} implies
  \begin{align}  \label{pr of complement}
    \Pr( (\mathcal{S}_0)_{c} ) 
    &\leq \sum_{l=1}^{p} \Pr(| (\nabla \mathcal{L}_{1}(\be^*))_l | >\lambda/2)   \nonumber \\ 
    &\leq  2p \exp(-C_1 n\lambda^{2}), 
  \end{align}
  where $C_1$ is a positive constant. 
  
  We next show that on the event $\mathcal{S}_0$, $\|\boldsymbol{w}\|_1\le 16q b_1^{-1}\lambda$,
  where $b_1$ is some positive constant. By~\eqref{algo:eq1}, we have 
    \[  \boldsymbol{w}^\top (\nabla^2 \mathcal{L}_{1}(\overline{\be})) \boldsymbol{w} \leq \lambda \Vert \boldsymbol{w} \Vert_{1} 
    + 2 \lambda( \Vert \be^* \Vert_{1} - \Vert \widetilde{ \be } \Vert_{1}),  \] 
    which implies that
    \begin{align} \label{lemma3:eq1}
    \boldsymbol{w}^\top\nabla^2 \mathcal{L}_{1}(\overline{\be})\boldsymbol{w} + \lambda  \Vert \boldsymbol{w} \Vert_{1}  
    &\leq 2 \lambda \bigg[ \sum_{i=1}^{p}\Big(  \vert \widetilde{\beta}_i - \beta^*_i \vert  +  \vert  \beta^*_i \vert - \vert \widetilde{\beta}_i \vert \Big)\bigg]\nonumber  \\
    &= 2 \lambda\bigg[ \sum_{i \in S^*}\Big (  \vert \widetilde{\beta}_i - \beta^*_i \vert  +  \vert  \beta^*_i \vert - \vert \widetilde{\beta}_i \vert \Big) \bigg]\nonumber  \\
    &\leq 4\lambda \sum_{i \in S^*}\vert \widetilde{\beta}_i - \beta^*_i \vert= 4 \lambda \Vert \boldsymbol{w}_{S^*} \Vert_{1}. 
    \end{align}
  Since $ \boldsymbol{w}^\top\nabla^2 \mathcal{L}_{1}( \overline{\be}) \boldsymbol{w} \geq 0 $, we obtain
  $\Vert \boldsymbol{w}_{S^*_c} \Vert_{1} \leq 3 \Vert \boldsymbol{w}_{S^*} \Vert_{1}.$
  This, together with (C5) and \eqref{lemma3:eq1}, implies that 
    \[  \Vert \boldsymbol{w}_{S^*} \Vert_{1}^2  \leq  q \Vert \boldsymbol{w}_{S^*} \Vert_{2}^2  
    \leq  q b_1^{-1}\big(\boldsymbol{w}^\top \mathcal{H}(\overline{\be}) \boldsymbol{w}\big) 
    \leq   4q b_1^{-1} \lambda \Vert \boldsymbol{w}_{S^*} \Vert_{1}. \]  
  That is, $\Vert \boldsymbol{w}_{S^*} \Vert_{1} \leq 4q b_1^{-1}\lambda.$
  In addition, it can be directly shown that
    \[  \Vert \boldsymbol{w} \Vert_{2} \leq  \Vert \boldsymbol{w} \Vert_{1} = \Vert \boldsymbol{w}_{S^*} \Vert_{1} + \Vert \boldsymbol{w}_{S^*_c} \Vert_{1}  
    \leq   4 \Vert \boldsymbol{w}_{S^*} \Vert_{1}    \leq   16q b_1^{-1}\lambda.   \]  
  This completes the proof.
\end{proof}
  
\begin{proof}[Proof of Theorem~\ref{thm: screening property}] 
  Let $ \mathcal{M}_{+}^{k} = \left\{ S: S^{*} \subset S , \Vert S \Vert_{0} \leq k \right\} $
  and $ \mathcal{M}_{-}^{k} = \left\{ S: S^{*} \not \subset S , \Vert S \Vert_{0} \leq k  \right\} $  
  denote the collections of over-fitted models and under-fitted models, respectively.
  To prove Theorem~\ref{thm: screening property}, it suffices to show that
  \begin{equation} \label{ssp:eq1}
  \Pr \Big\{ \min_{S \in \mathcal{M}_{-}^{k}} \ell(\widehat{\be}_{S}) \leq \max_{S \in \mathcal{M}_{+}^{k}} \ell (\widehat{\be}_{S}) \Big\} \rightarrow 0 ~~\text{as}~N\rightarrow \infty.
  \end{equation}
  For any $ S \in \mathcal{M}_{-}^{k} $, define $ S^{\prime}=S \cup S^{*}$. Thus, $S^{\prime} \in \mathcal{M}_{+}^{2 k} $. 
  Let $\be_{S^{\prime}}$ be any vector 
  such that $ \left\|\be_{S^{\prime}}-\be_{S^{\prime}}^{*}\right\|_2 =w_{1} N^{-\tau_{1}} $ for some $ w_{1}$ and $ \tau_{1}>0 $. 
  By the Taylor's expansion, we have 
  \begin{align*}
    \ell( \be_{S^{\prime}} )-\ell ( \be_{S^{\prime}}^{*} )
    & = (\be_{S^{\prime}}-\be_{S^{\prime}}^{*} )^{\top} \nabla \ell(\be_{S^{\prime}}^{*})
    +(1/2)(\be_{S^{\prime}}-\be_{S^{\prime}}^{*})^{\top} \mathcal H(\overline{\be}_{S^{\prime}})(\boldsymbol{\beta}_{S^{\prime}}-\boldsymbol{\beta}_{S^{\prime}}^{*}) \\
    & \geq (\boldsymbol{\beta}_{S^{\prime}}-\boldsymbol{\beta}_{S^{\prime}}^{*})^{\top} \nabla \ell(\boldsymbol{\beta}_{S^{\prime}}^{*})
    +(c_{1}/2)\| \boldsymbol{\beta}_{S^{\prime}}-\boldsymbol{\beta}_{S^{\prime}}^{*}\|_{2}^{2}                        \\
    & \geq -w_{1} N^{-\tau_{1}}\left\| \nabla \ell \left(\boldsymbol{\beta}_{S^{\prime}}^{*}\right)\right\|_{2}+\left(c_{1} / 2\right) w_{1}^{2} N^{-2 \tau_{1}},
  \end{align*}
  where $ \overline{\boldsymbol{\beta}}_{S^{\prime}} $ is between $ \boldsymbol{\beta}_{S^{\prime}} $ and $ \boldsymbol{\beta}_{S^{\prime}}^{*} $.
  The first inequality holds due to Condition (C3), while the last one holds by the Cauchy-Schwarz inequality.
Moreover,   
  \[  \nabla \ell ( \be_{S^{\prime}}^{*}) = \nabla \mathcal{L}( \widetilde{\be}_{S^{\prime}}) - \nabla \mathcal{L}( \be_{S^{\prime}}^{*}) + \nabla \mathcal{L}( \be_{S^{\prime}}^{*} )
  - \Big[\nabla\mathcal{L}_{1}(\widetilde{\be}_{S^{\prime}})  - \nabla \mathcal{L}_{1}(  \be_{S^{\prime}}^{*}) \Big].  \] 
Therefore, we have
  \begin{equation} \label{ssp:eq2}
   \begin{aligned} 
     \Pr\left\{ \ell \left( \be_{S^{\prime}}^{*} \right) - \ell \left( \be_{S^{\prime}} \right) \geq 0\right\}  
    \leq & \Pr\left\{\left\| \nabla \ell \left(\be_{S^{\prime}}^{*}\right)\right\|_{2} \geq \left(c_{1} w_{1} / 2\right) N^{-\tau_{1}} \right\}    \\ 
     \leq & \Pr\left\{  \left \Vert \nabla \mathcal{L}( \widetilde{\be}_{S^{\prime}}) - \nabla \mathcal{L} \left( \be_{S^{\prime}}^{*} \right) \right \Vert_2 \geq \left(c_{1} w_{1} / 6\right) N^{-\tau_{1}}  \right\}    \\ 
     &+ \Pr\left\{  \left \Vert \nabla \mathcal{L}_{1}( \widetilde{\be}_{S^{\prime}}) - \nabla \mathcal{L}_{1} \left( \be_{S^{\prime}}^{*} \right) \right \Vert_2 \geq \left(c_{1} w_{1} / 6\right) N^{-\tau_{1}}  \right\}     \\ 
     &+ \Pr\left\{  \left \Vert \nabla \mathcal{L} \left( \be_{S^{\prime}}^{*} \right) \right \Vert_2 \geq \left(c_{1} w_{1} / 6\right) N^{-\tau_{1}}  \right\}.   
   \end{aligned} 
  \end{equation}
  For the first term on the right-hand side of \eqref{ssp:eq2}, we have 
  \begin{align}  \label{ssp:eq:11}
  \nabla \mathcal{L} ( \widetilde{\be}_{S^{\prime}}) - \nabla \mathcal{L}( \be_{S^{\prime}}^{*}) 
   =& \frac{1}{N} \sum_{i=1}^{m} \sum_{j=1}^{n} \boldsymbol{X}_{ijS^{\prime}} 
   \left[  b^{\prime}(\boldsymbol{X}_{ijS^{\prime}}^{\top} \widetilde{\be}_{S^{\prime}})
   - b^{\prime}(\boldsymbol{X}_{ijS^{\prime}}^{\top}\be_{S^{\prime}}^{*})  \right]   \nonumber   \\
   =&  \frac{1}{N} \sum_{i=1}^{m} \sum_{j=1}^{n} \boldsymbol{X}_{ijS^{\prime}} \boldsymbol{X}_{ijS^{\prime}}^{\top} 
   b^{\prime\prime}(\boldsymbol{X}_{ijS^{\prime}}^{\top} \overline{\be})( \widetilde{\be}_{S^{\prime}} - \be_{S^{\prime}}^{*})   \nonumber   \\
  =& \frac{1}{N} \mathcal{X}_{S^\prime}^{\top}   \text{diag} \left(  b^{\prime\prime}(\boldsymbol{X}_{11S^{\prime}}^{\top} \overline{\be}) , \ldots, b^{\prime\prime}(\boldsymbol{X}_{1nS^{\prime}}^{\top} \overline{\be} \right) , \ldots, b^{\prime\prime}(\boldsymbol{X}_{mnS^{\prime}}^{\top} \overline{\be} ))\mathcal{X}_{S^\prime}  \big( \widetilde{\be}_{S^{\prime}} - \be_{S^{\prime}}^{*} \big),
  \end{align} 
  where $ \overline{\be} $ lies between $ \widetilde{\be}_{S^{\prime}} $ and $ \be_{S^{\prime}}^{*} $, $ \mathcal{X}_{S^\prime}^{\top} = \left( (\boldsymbol{X}_{11})_{S^\prime}, \dots, (\boldsymbol{X}_{1n})_{S^\prime}, \ldots, (\boldsymbol{X}_{mn})_{S^\prime} \right)  $ is the sub-matrix of $ \mathcal{X}^{\top} $ indexed by model $ S^\prime $. 
  By the Courant-Fischer theorem, we know that the maximum singular value of sub-matrix $ \mathcal{X}_{S^\prime} $ is less than or equal to the maximum singular value of $ \mathcal{X} $. 
  Thus, we have $  \Vert \nabla \mathcal{L} ( \widetilde{\be}_{S^{\prime}}) - \nabla \mathcal{L}( \be_{S^{\prime}}^{*}) \Vert_2  \leq   (\rho_0 \mu/N)  \|  \widetilde{\be}_{S^{\prime}} -\be_{S^{\prime}}^{*}  \|_2 $, where $ \rho_0 $ is the largest eigenvalue of $ \mathcal{X}^{\top} \mathcal{X} $. 
  Similarly, it can be shown that $ \Vert \nabla \mathcal{L}_{1} ( \widetilde{\be}_{S^{\prime}} ) - \nabla \mathcal{L}_{1} \left( \be_{S^{\prime}}^{*} \right)  \Vert_2  \leq   (1/n)  \rho_1 \mu  \Vert  \widetilde{\be}_{S^{\prime}} -\be_{S^{\prime}}^{*}  \Vert_2 $, where $ \rho_1 $ is the largest eigenvalue of $ \mathbb{X}_1^{\top} \mathbb{X}_1 $. 
  
  Note that $ \lambda =O (q \sqrt{\ln p/n}) = O (N^{\tau_2} \sqrt{N^{\alpha}/n}) $, 
  which implies that $ \lambda \rightarrow 0 $ and $ N^{1 - \tau_1}  \geq n N^{-\tau_1} \geq \lambda q $ by Conditions (C1) and (C2) and $ n \gtrsim N^{(2/3) \tau_1 + 2\tau_2 + \alpha} $. 
By Condition (C1) and the fact that $ \Vert \widetilde{\be}_{S^{\prime}} -\be_{S^{\prime}}^{*} \Vert_2  \leq  \Vert  \widetilde{ \be } -\be^{*} \Vert_2 $, 
Lemma~\ref{lemma 2} implies 
  \begin{align} \label{ssp:eq3}
  \Pr\big\{ \Vert \nabla \mathcal{L} ( \widetilde{\be}_{S^{\prime}} ) - \nabla \mathcal{L}( \be_{S^{\prime}}^{*} ) \Vert_2 
  & \geq \left(c_{1} w_{1} / 6\right) N^{-\tau_{1}}  \big\}  \nonumber  \\
  &\leq   \Pr\big\{\Vert  \widetilde{\be}_{S^{\prime}} -\be_{S^{\prime}}^{*} \Vert_2 \geq (c_{1} w_{1} / (6 \rho_{0} \mu) ) N^{1-\tau_{1}}  \big\}    \nonumber  \\
  &\leq   2p \exp(- C_1 n \lambda^2 ).
  \end{align}
  Similarly, we can show that the second term on the right-hand side of \eqref{ssp:eq2} also satisfies 
  \begin{align} \label{ssp:eq4}
  \Pr\big\{ \Vert \nabla \mathcal{L}_{1} ( \widetilde{\be}_{S^{\prime}} ) - \nabla \mathcal{L}_{1} ( \be_{S^{\prime}}^{*}) \Vert_2 \geq \left(c_{1} w_{1} / 6\right) N^{-\tau_{1}} \big\} 
  \leq   2p \exp(- C_1 n \lambda^2 ).
  \end{align}

  For the third term on the right-hand side of~\eqref{ssp:eq2}, we have 
  \begin{equation} \label{ssp:eq5}
  \begin{aligned} 
    \Pr\{\Vert \nabla \mathcal{L} ( \be_{S^{\prime}}^{*}) \Vert_2 \geq (c_{1} w_{1} / 6) N^{-\tau_{1}}\}
    \leq  \sum_{l \in S^{\prime}} \Pr \big\{ (\nabla \mathcal{L}( \be_{S^{\prime}}^{*}))_l^{2}  \geq  (2k)^{-1} (c_{1} w_{1} / 6)^{2} N^{-2\tau_{1}}\big\}, 	 
  \end{aligned} 
  \end{equation}
  where $(\nabla \mathcal{L}( \be_{S^{\prime}}^{*}))_l$ denotes the $l$th component of $\nabla \mathcal{L}( \be_{S^{\prime}}^{*})$ and is defined as
  \[
    (\nabla \mathcal{L}( \be_{S^{\prime}}^{*}))_l
    =N^{-1}\sum\limits_{i=1}^{m} \sum\limits_{j=1}^{n}[  b^{\prime}(\boldsymbol{X}_{ijS^{\prime}}^{\top} \be_{S^{\prime}}^{*} ) - y_{ij}] x_{ij, l}
    =-N^{-1}\sum\limits_{i=1}^{m} \sum\limits_{j=1}^{n} [y_{ij}-\mathbb{E} (y_{ij}|\boldsymbol{X}_{ij})] x_{ij, l}.
  \]
  Let $ t_{Nij} = x_{ij,l} ( \sum_{i=1}^{m} \sum_{j=1}^{n} x_{ij,l}^{2} \sigma_{ij}^{2})^{-1/2}$,
  which satisfies $\sum_{i=1}^{m} \sum_{j=1}^{n} t_{Nij}^{2} \sigma_{ij}^{2}=1 $. 
Then, by some arguments similar to the proof of Lemma \ref{lemma 1}, we can show
  \begin{align*}
  \Pr \left\lbrace \left|  (\nabla \mathcal{L}( \be_{S^{\prime}}^{*}))_l \right|    
  \geq (2k)^{-1/2} (c_{1} w_{1} / 6) N^{-\tau_{1}} \right\rbrace  \leq 2\exp \left( -b_2 N^{1-2\tau_{1}-\tau_{2}} \right),
  \end{align*}
  where $ b_2$ is some positive constant.
This, together with \eqref{ssp:eq5} implies
  \begin{equation}  \label{ssp:third}
  \Pr\{\Vert \nabla \mathcal{L} ( \be_{S^{\prime}}^{*}) \Vert_2 \geq (c_{1} w_{1} / 6) N^{-\tau_{1}}\} \le 4 k \exp \left( -b_2 N^{1-2\tau_{1}-\tau_{2}} \right).  	
  \end{equation}  
Combining the results of~\eqref{ssp:eq2}, \eqref{ssp:eq3}, \eqref{ssp:eq4} and \eqref{ssp:third}, we obtain 
  \begin{align}  \label{ssp:eq9}
    \Pr \Big\{ \min_{S \in \mathcal{M}_{-}^{k}} \ell ( \be_{S^{\prime}} ) \leq \ell ( \be_{S^{\prime}}^{*} )\Big\} 
    \leq &\sum_{S \in \mathcal{M}_{-}^{k}} \Pr\big\{  \ell ( \be_{S^{\prime}} ) \leq \ell ( \be_{S^{\prime}}^{*} )\big\}   \nonumber  \\
    \leq & p^{k} 2p \exp(- C_1 n \lambda^2 )    
    +  p^{k} 4 k \exp \left( -b_2 N^{1-2\tau_{1}-\tau_{2}} \right)      \nonumber  \\
    \leq & b_3 \exp \left( b_4 N^{\tau_2 +\alpha} - b_5 N^{2 \tau_2 + \alpha}  \right) 
    +    b_6 \exp \big( \tau_{2} \ln N + b_7 N^{\tau_{2}+ \alpha} -b_8 N^{1-2 \tau_{1}-\tau_{2}} \big),
  \end{align}
  where $b_i~(i=3,\dots,8)$ are some positive constants and the last inequality holds by Conditions (C1) and (C2).
   If $ \tau_{1} + \tau_{2} < (1-\alpha)/2 $,
then the right-hand side of \eqref{ssp:eq9} tends to zero. 
Furthermore, since the variance $ b^{\prime \prime}(\cdot) \geq 0 $, we have that $\ell(\be_{S^\prime})$ is convex in $\be_{S^\prime}$.
  Therefore, these results still hold for any  $ \be_{S^{\prime}} $ that satisfies $ \left\| \be_{S^{\prime}} - \be_{S^{\prime}}^{*}\right\|_{2} \geq  w_{1} N^{-\tau_{1}} $.
  
  For any $ S \in \mathcal{M}_{-}^{k} $, 
  let $ \breve{\be}_{S^{\prime}} $ be $ \widehat{\be}_{S} $ augmented with zeros corresponding to the elements 
  in $ S^{\prime} \setminus S^{*} = S \setminus S^{*} $. 
  By Condition (C2),  we have $\| \breve{\be}_{S^{\prime}} - \be_{S^{\prime}}^{*}\|_{2} \geq \| \be_{S^{*} \setminus S}^{*}\|_{2} \geq w_{1} N^{-\tau_{1}} $. 
  Then by~\eqref{ssp:eq9}, we obtain
  \[  \Pr \Big\{ \min_{S \in \mathcal{M}_{-}^{k}} \ell(\hat{\be}_{S}) \leq \max_{S \in \mathcal{M}_{+}^{k}} \ell(\hat{\be}_{S})\Big\}  
  \leq   \Pr\Big\{\min_{S \in \mathcal{M}_{-}^{k}} \ell ( \breve{\be}_{S^{\prime}} ) \leq \ell (\be_{S^{\prime}}^{*} )\Big\} 
  = o(1).   \] 
  This completes the proof.
\end{proof}

\begin{proof}[Proof of Theorem~\ref{thm: DIHT screening property}]
  For simplicity, define $ d = \min_{j \in S^{*}}| \beta_{j}^{*}| $.
  To prove Theorem \ref{thm: DIHT screening property}, it suffices to show that for any $ t \geq 1$,  
  \[  \Pr\{ \Vert  \widehat{ \be }^{(t)}-\be^{*} \Vert _{\infty} < d/ 2 \} \rightarrow 1~~\text{as}~N\rightarrow \infty.  \] 
  We next show 
  \begin{equation} \label{algo:eq3}
  \| \widehat{ \be }^{(t)} - \be^{*} \|_{\infty} = o_{p}(d). 
  \end{equation}
  For this, we use the mathematical induction method.
  First, we prove that the statements hold when $ t=0 $. 
  Note that when $t=0$, $\widehat\be^{(0)}=\widetilde{ \be } $ is the Lasso estimator of $\be^*$ obtained by \eqref{initial:Lasso}.
  If Condition (C1) holds and $ \lambda n^{1/2}N^{-\alpha/2}  \to \infty $ as $N\rightarrow \infty$, then Lemma \ref{lemma 2} implies that
  $$ \Pr( \Vert  \widetilde{ \be } - \be^{*} \Vert_{1} \leq C_2 \lambda q ) \to 1 .$$
  In addition,  $ \lambda q = o(N^{-\tau_1})$ under Condition (C2) and $ \lambda = o ( N^{-(\tau_{1} + \tau_{2} )})$.
  These facts imply
  \[  \Vert  \widehat{ \be }^{(0)} - \be^{*} \Vert_{\infty} 
  =\Vert  \widetilde{ \be } - \be^{*} \Vert_{\infty}  
  \leq \Vert  \widetilde{ \be } - \be^{*} \Vert_{1} = o_{p}(d).  \]  
Thus, the statements in \eqref{algo:eq3} holds with $t=0$.
  
  We next show that if $\|\widehat\be^{(t-1)}-\be^*\|_{\infty}=o_p(d)$,
  then $\|\widehat\be^{(t)}-\be^*\|_{\infty}$ is also of order $o_p(d)$.
  Note that $ \widehat{ \be }^{(t)} = \mathbb{H}(\widehat{ \boldsymbol{\gamma} }^{(t-1)},r_k)$,
  where $\widehat{ \boldsymbol{\gamma} }^{(t-1)} = \widehat{ \be }^{(t-1)} -\vartheta^{-1} \nabla \ell (\widehat{ \be }^{(t-1)})$, $r_k$ is the $k$-th largest component of $|\widehat{\boldsymbol{\gamma} }^{(t-1)}|$, 
  and $\mathbb{H}(x,a)=x\mathbb{I}(|x|\ge a)$ denotes the hard-thresholding function.
  If $ \Vert  \widehat{ \boldsymbol{\gamma} }^{(t-1)} - \be^{*} \Vert _{\infty} = o_{p}(d) $, 
  then we have $ \Vert  \widehat{ \boldsymbol{\gamma} }^{(t-1)}_{S^*_c} \Vert _{\infty} = o_{p}(d) $ 
  and $ \Vert \widehat{ \boldsymbol{\gamma} }^{(t-1)}_{S^*} \Vert _{\infty} = O_{p}(d)$. 
  This implies that with probability approaching one, 
  the components in $ \widehat{ \boldsymbol{\gamma} }^{(t-1)}_{S^*} $ are among the ones with the largest absolute values among all components.
  Therefore, if $ \Vert  \widehat{ \boldsymbol{\gamma} }^{(t-1)} - \be^{*} \Vert _{\infty} = o_{p}(d) $, then we have  
  \begin{equation} \label{algo:eq4}
  \| \widehat{ \be }^{(t)} - \be^{*}\|_{\infty} 
  = \|\mathbb{H}(\widehat{ \boldsymbol{\gamma} }^{(t-1)},r_k) - \be^{*}\|_{\infty}    \leq    \|\widehat{ \boldsymbol{\gamma} }^{(t-1)} - \be^{*} \|_{\infty} = o_{p}(d). 
  \end{equation}
  Now, we need to prove $ \Vert  \widehat{ \boldsymbol{\gamma} }^{(t-1)} - \be^{*} \Vert _{\infty} = o_{p}(d) $,
  that is,
  \[  \Vert  \widehat{ \be }^{(t-1)} -\vartheta^{-1} \nabla \ell (\widehat{ \be }^{(t-1)}) - \be^{*}  \Vert _{\infty} = o_{p}(d).  \] 
Since 
  \begin{align*}
      \| \widehat{ \be }^{(t-1)} -\vartheta^{-1} \nabla \ell ( \widehat{ \be }^{(t-1)}) - \be^{*}\|_{\infty}  
  \leq  \| \widehat{ \be }^{(t-1)} - \be^{*}\|_{\infty} +\|\vartheta^{-1} \nabla \ell (\widehat{ \be }^{(t-1)})\|_{\infty}, 
  \end{align*}
and $ \| \widehat{ \be }^{(t-1)} - \be^{*}\|_{\infty}=o_p(d)$ under the induction assumption,
it suffices to show that $\|\vartheta^{-1} \nabla \ell (\widehat{ \be }^{(t-1)})\|_{\infty}=o_p(d)$.
  We begin with the following decomposition
  \begin{align}  \label{algo:eq5}
    &\|\nabla \ell (\widehat{ \be }^{(t-1)})\|_{\infty}
    =     \| \nabla \mathcal{L}_{1}( \widehat{ \be }^{(t-1)}) - \nabla \mathcal{L}_{1}(\widetilde{ \be }) + \nabla \mathcal{L}(\widetilde{ \be })\|_{\infty}  \nonumber   \\
    \leq & \| \nabla \mathcal{L}_{1}( \widehat{ \be }^{(t-1)}) - \nabla \mathcal{L}_{1}(\be^*)\Vert_{\infty}
    +  \Vert \nabla \mathcal{L}_{1}(\widetilde{ \be }) - \nabla \mathcal{L}_{1}(\be^*) \Vert_{\infty}    \nonumber      \\
     & +    \Vert \nabla \mathcal{L}(\widetilde\be) - \nabla \mathcal{L}(\be^*) \Vert_{\infty}
    + \Vert  \nabla \mathcal{L}(\be^*) \Vert_{\infty}.
  \end{align}
  Note that $ \nabla^2 \mathcal{L}_{1}({\be}) = (1/n) \mathbb{X}_1^{\top} \text{diag}(b^{\prime\prime} ( \boldsymbol{X}_{11}^{\top} {\be}), \ldots, b^{\prime\prime} ( \boldsymbol{X}_{1n}^{\top} {\be}))  \mathbb{X}_1 $ and $b^{\prime\prime} (\theta)\le \mu $.
  Additionally, $|x_{ij,l}|\le c_2$ under Condition (C4) 
  and the number of non-zero entries in $(\widehat{ \be }^{(t-1)} - \be^{*})$ is at most $2k$.
  By the Taylor's expansion, the first term on the right-hand side of \eqref{algo:eq5} satisfies
  \begin{align}\label{algo:eq6}
    \| \nabla \mathcal{L}_{1}( \widehat{ \be }^{(t-1)}) - \nabla \mathcal{L}_{1}(\be^*)\Vert_{\infty}
    =     \Vert  \nabla^2 \mathcal{L}_{1}(\overline{\be}_1) ( \widehat{ \be }^{(t-1)} - \be^{*}) \Vert_{\infty} 
    \le    2\mu c_2^2 k\Vert  \widehat{ \be }^{(t-1)} - \be^{*} \Vert_{\infty},
  \end{align}
  where $ \overline{\be}_1 $ is between $\widehat\be^{(t-1)}$ and $\be^*$.
  This implies that 
  \begin{align*}
  \vartheta^{-1}\|\nabla \mathcal{L}_{1}( \widehat{ \be }^{(t-1)}) - \nabla \mathcal{L}_{1}(\be^*)\Vert_{\infty}=o_p(kd/\vartheta),
  \end{align*}
  which is of order $ o_p(d) $ with $ \vartheta\ge k $.
  Similarly, the second and third terms on the right-hand side of~\eqref{algo:eq5} satisfy
  \begin{align*}
    &\Vert \nabla \mathcal{L}_{1}(\widetilde{ \be }) - \nabla \mathcal{L}_{1}(\be^*) \Vert_{\infty}
    =  \Vert  \nabla^2 \mathcal{L}_{1}(\overline{\be}_2) (\widetilde{ \be } - \be^{*}) \Vert_{\infty} =  O(k) \|\widetilde{ \be } - \be^{*}\|_{\infty}       \\
    \text{and}~~& \Vert\nabla \mathcal{L}(\widetilde\be) - \nabla \mathcal{L}(\be^*)\Vert_{\infty}
    =  \Vert  \nabla^2 \mathcal{L}(\overline{\be}_3) (\widetilde{ \be } - \be^{*}) \Vert_{\infty}  =  O(k) \|\widetilde{ \be } - \be^{*}\|_{\infty},
  \end{align*}
  where $\overline{\be}_2$ and $\overline{\be}_3$ are between $\widehat\be^{(t-1)}$ and $\be^*$. 
  Since $ \|\widetilde{ \be } - \be^{*}\|_{\infty} = o_p(d) $ and $\vartheta\ge k$,  
  we have $ \vartheta^{-1} \Vert \nabla \mathcal{L}_{1}(\widetilde{ \be }) - \nabla \mathcal{L}_{1}(\be^*) \Vert_{\infty} = o_p(d) $  and $ \vartheta^{-1} \Vert\nabla \mathcal{L}(\widetilde\be) - \nabla \mathcal{L}(\be^*)\Vert_{\infty} = o_p(d) $.
  
By Condition (C2) and $ \vartheta N^{- \tau_1} \to 0 $, we have $ \vartheta d \to 0 $ as $N\rightarrow \infty.$
In addition,  by some arguments similar to the proof of Lemma \ref{lemma 1},
we have
  \begin{align*}
    \Pr\big(\vartheta^{-1} \Vert  \nabla \mathcal{L}(\be^*) \Vert_{\infty} > b_9 \vartheta d^2 \big)
     \leq  \sum_{l=1}^{p} \Pr \left( | (\nabla \mathcal{L}(\be^*))_l | >b_9 (\vartheta d)^2 \right)  
     \leq 2\exp(b_{10}N^{\alpha} - b_{11}\vartheta^4 N^{1-4\tau_1}),
  \end{align*}
  where $b_9, b_{10}$ and $b_{11}$ are some positive constants.
  Therefore, if $\vartheta N^{(1-\alpha-4\tau_1)/4} \to \infty $ as $N\rightarrow\infty$, 
  then the last term on the right-hand side of \eqref{algo:eq5} satisfies $\vartheta^{-1} \Vert  \nabla \mathcal{L}(\be^*) \Vert_{\infty} = O_p(\vartheta d^2)=o_p(d) $.
  These facts imply that $\|\vartheta^{-1} \nabla \ell (\widehat{ \be }^{(t-1)})\|_{\infty}=o_p(d)$.
  Therefore, \eqref{algo:eq4} holds. This completes the proof.
\end{proof}

\end{document}